\documentclass[journal]{IEEEtran}
\IEEEoverridecommandlockouts
\usepackage[utf8]{inputenc}
\usepackage{colortbl} 
\usepackage[T1]{fontenc}
\usepackage{textcomp}
\usepackage{fixltx2e}
\usepackage{microtype}
\usepackage{calc}
\usepackage[normalem]{ulem}
\usepackage{makecell}
\usepackage{diagbox}
\usepackage{pifont}
\usepackage{gensymb}
\usepackage{comment}
\usepackage{bbm}
\usepackage{amsthm}
\newtheorem{Theorem}{Theorem}

\usepackage{amsmath,amssymb,amsfonts}
\usepackage{algorithm}
\usepackage{algorithmic}
\usepackage{multirow}
\usepackage{textcomp}
\usepackage{xcolor}\usepackage[dvipsnames]{xcolor}
\usepackage{url}
\usepackage{ tipa }
\usepackage[range-phrase=--,per-mode=symbol-or-fraction,binary-units=true,range-units=single,list-units=single,detect-all]{siunitx}
\usepackage[capitalize,noabbrev]{cleveref}
\crefformat{footnote}{#2\footnotemark[#1]#3}

\usepackage{booktabs}
\usepackage{url}
\RequirePackage{xstring}
\RequirePackage{xparse}
\RequirePackage[]{acro}
\NewDocumentCommand\acrodef{mO{#1}mG{}}{\DeclareAcronym{#1}{short={#2}, long={#3}, #4}}
\usepackage{siunitx}

\DeclareUnicodeCharacter{0301}{\'{e}}
\usepackage{acro}
\DeclareAcronym{LMI}{
  short = LMI,
  long  = linear matrix inequality,
  long-plural-form = linear matrix inequalities
}
\acrodef{mmWave}{millimeter-wave}
\acrodef{ISAC}{integrated sensing and communications}
\acrodef{PMCW}{phase-modulated continuous waveform}
\acrodef{FMCW}{frequency-modulated continuous waveform}
\acrodef{OFDM}{orthogonal frequency-division multiplexing}
\acrodef{MIMO}{multiple-input multiple-output}
\acrodef{V2X}{vehicle-to-everything}
\acrodef{PAPR}{peak-to-average power ratio}
\acrodef{SINR}{signal-to-interference-plus-noise ratio}
\acrodef{RDM}{range-Doppler map}
\acrodef{SISO}{single-input single-output}
\acrodef{CRLB}{Cramér-Rao lower bound}
\acrodef{ULA}{uniform linear array}
\acrodef{UL}{uplink}
\acrodef{UE}{user equipment}
\acrodef{AoD}{angle of departure}
\acrodef{AoA}{angle of arrival}
\acrodef{AWGN}{additive white Gaussian noise}
\acrodef{RCS}{radar cross section}
\acrodef{RSU}{roadside unit}
\acrodef{LoS}{line-of-sight}
\acrodef{NLoS}{non-line-of-sight}
\acrodef{FFT}{fast Fourier transform}
\acrodef{IFFT}{inverse fast Fourier transform}
\acrodef{PRI}{pulse repetition interval}
\acrodef{i.i.d.}{independent and identically distributed}
\acrodef{MU-MIMO}{multi-user MIMO}
\acrodef{FIM}{Fisher information matrix}
\acrodef{IIoT}{intelligent internet of things}
\acrodef{CU}{central unit}
\acrodef{MUI}{multi-user interference}
\acrodef{LTE}{long-term evolution}
\acrodef{NR}{new radio}
\acrodef{C-RAN}{cloud radio access network}
\acrodef{RRA}{radio resource allocation}
\acrodef{MSE}{mean squared error}
\acrodef{VA}{virtual array}
\acrodef{EFIM}{equivalent FIM}
\acrodef{TDM}{time-division multiplexing}
\acrodef{RMS}{root mean square}
\acrodef{FDM}{frequency-division multiplexing}
\acrodef{CDM}{code-division multiplexing}
\acrodef{OTFS}{orthogonal time frequency space}
\acrodef{PEB}{position error bound}
\acrodef{VEB}{velocity error bound}
\acrodef{MMS}{multi-monostatic sensing}
\acrodef{MBS}{multi-bistatic sensing}
\acrodef{MXS}{multi-X-static hybrid sensing}
\acrodef{UAV}{unmanned aerial vehicle}
\acrodef{BF}{beamforming}
\acrodef{BS}{base station}
\acrodef{SNR}{signal-to-noise ratio}
\acrodef{eMBB}{enhanced mobile broadband}
\acrodef{MUSIC}{multiple signal classification}
\acrodef{FE}{fine estimation}
\acrodef{Tx}{transmitter}
\acrodef{Rx}{receiver}
\acrodef{TRx}{transceiver}
\acrodef{DL}{downlink}
\acrodef{UP}{uplink}
\acrodef{DoF}{degree of freedom}
\acrodef{ESNR}{energy SNR}
\acrodef{BPSK}{binary phase shift keying}
\acrodef{ICI}{inter-cell interference}
\acrodef{CP}{cyclic prefix}
\acrodef{ISFFT}{inverse symplectic finite Fourier transform}
\acrodef{SFFT}{symplectic finite Fourier transform}
\acrodef{DD}{delay-Doppler}
\acrodef{RU}{radio unit}
\acrodef{SDP}{semidefinite programming}
\acrodef{PSD}{positive semidefinite}
\acrodef{ISD}{inter-side distance}
\acrodef{MRT}{maximum ratio transmission}
\acrodef{NMAE}{normalized mean absolute error}
\acrodef{QoS}{quality of service}
\acrodef{AP}{access point}
\acrodef{CoMP}{coordinated multipoint}
\acrodef{PA}{power allocation}
\acrodef{SE}{spectrum efficiency}
\acrodef{SDR}{semidefinite relaxation}
\acrodef{SOCP}{second-order cone programming}
\acrodef{SOC}{second-order cone}
\acrodef{JT}{joint transmission}
\acrodef{ZF}{zero-forcing}
\acrodef{RZF}{regularized zero-forcing}
\acrodef{NSP}{null-space projection}
\acrodef{BCD}{block coordinate descent}
\acrodef{SCA}{successive convex approximation}
\acrodef{CS}{coordinated scheduling}
\acrodef{CB}{coordinated beamforming}
\acrodef{DPS}{dynamic point selection}
\acrodef{KKT}{Karush–Kuhn–Tucker}
\acrodef{CSI}{channel state information}
\acrodef{PRNS}{pseudo-random noise sequence}
\acrodef{SI}{sensing interference}
\acrodef{ITI}{inter-target interference}
\acrodef{QCQP}{quadratically constrained quadratic program}
\acrodef{IPM}{interior-point method}
\acrodef{NF}{noise figure}
\acrodef{UMi}{urban micro}

\usepackage[caption=false,font=footnotesize]{subfig}
\usepackage[pdftex]{graphicx}
\DeclareGraphicsExtensions{.pdf,.png,.jpg,.tikz}
\usepackage{csquotes}
\usepackage[backend=biber,style=ieee,doi=false,isbn=false,sorting=none,sortcites=true,mincitenames=1,maxcitenames=2]{biblatex}
\DeclareFieldFormat{sentencecase}{#1} 
\DeclareFieldFormat{titlecase}{#1} 
\usepackage{xpatch}
\xpatchbibmacro{textcite}{\addspace}{\addnbspace}{}{}
\xpatchbibmacro{Textcite}{\addspace}{\addnbspace}{}{}
\DefineBibliographyStrings{english}{
	andothers = et~al\adddot\addspace
}

\graphicspath{{./Bilder/}}    
\begin{document}

\title{CRLB-Driven Beamforming and Power Allocation for Multi-BS Cooperative ISAC Networks}

\author{%
Yanpeng Su,~\IEEEmembership{Graduate Student Member,~IEEE}, Maximilian Lübke,~\IEEEmembership{Member,~IEEE},\\Mengyu Zhang,~\IEEEmembership{Graduate Student Member,~IEEE}, and Norman Franchi,~\IEEEmembership{Member,~IEEE},\\
\thanks{This work was supported by BMFTR, Germany, under Open6GHub+ (16KIS2404). The work contributes to the research within the 6G-Valley innovation cluster. \textit{(Corresponding author: Yanpeng Su.)}}
\thanks{Yanpeng Su, Maximilian Lübke, Mengyu Zhang, and Norman Franchi are with the Institute for Smart Electronics and Systems, Friedrich-Alexander-Universität Erlangen-Nürnberg, Erlangen 91058, Germany (email: yanpeng.su@fau.de; maximilian.luebke@fau.de; mengyu.zhang@fau.de; norman.franchi@fau.de).}
}
\markboth{IEEE}{Su \textit{et al.}: CRLB-Driven Beamforming and Power Allocation for Cooperative ISAC Networks}

\maketitle

\begin{abstract}
This paper presents a Cramér–Rao lower bound (CRLB)-driven beamforming (BF) and power allocation (PA) framework for cooperative integrated sensing and communication (ISAC) networks, where a set of multi-antenna base stations (BSs) jointly serve multiple users and simultaneously perform multi-static target estimation. 
In our design, we investigate how the position error bound (PEB) and velocity error bound (VEB) can be exploited and incorporated into BF and PA optimization. The design leveraging PEB and VEB as metrics directly characterizes sensing accuracy.
First, we propose a semidefinite programming (SDP)-based BF, where the PEB and VEB constraints are handled by double Schur complements, and the rank-1 constraints of the BF covariance matrices are relaxed by semidefinite relaxation (SDR), whose tightness is proved using the Karush–Kuhn–Tucker (KKT) conditions. Addressing the complexity, we further develop a two-stage PA algorithm, where the communication and sensing beams are formed by the regularized zero-forcing (RZF) and null-space projection (NSP) methods, respectively, and the communication and sensing PAs are solved sequentially by second-order cone programming (SOCP). Although the PA algorithm sacrifices the degrees of freedom, its performance shows a slight gap compared to BF in the simulation, while the execution time of around 25\,ms demonstrates its applicability in dynamic environments. 

\end{abstract}

\begin{IEEEkeywords}
\text{ }6G, beamforming, cell-free, coordinated multipoint (CoMP), Cramér–Rao lower bound (CRLB), integrated sensing and communications (ISAC), massive multiple-input multiple-output (MIMO), networked ISAC, power allocation
\end{IEEEkeywords}

\section{Introduction}
\Ac{ISAC} is regarded as a key enabling technology for future wireless networks. By allowing communication and sensing functionalities to share the spectrum, hardware, and radio resources, ISAC provides environmental awareness for emerging applications such as intelligent transportation systems, smart home and factory, and intelligent internet of things \cite{10536135,persp}, while avoiding significantly increased spectrum conflict and energy consumption \cite{9737357,10012421}. Unlike conventional coexistence communication and sensing, where the two systems are designed separately, ISAC requires a unified design in signal transmission and processing \cite{10188491}, highlighting the necessity of \ac{RRA} for ISAC-enabled networks.

Multi-\ac{BS} cooperation has become an essential technique since the advent of 4G and 5G, with \ac{CoMP} transmission deployed to enhance the \ac{QoS} at the cell edges and suppress the \ac{ICI} \cite{9272226,3gpp.36.819}. 
With the evolution toward 6G, cooperative ISAC networks are expected to become an important paradigm \cite{10418473}, where multiple nodes jointly serve multiple \acp{UE} and perceive objects. Such an architecture is closely related to the concepts of \ac{CoMP} \cite{10599241}, cell-free \cite{7827017}, and distributed \ac{MIMO} \cite{8835674}, in which the distributed BSs or \acp{AP} are connected to a \ac{CU} for coordinated transmission, reception, and signal processing. Besides the improved throughput and seamless coverage of the communication service, cooperative sensing provides higher sensing spatial diversity and resolution and enables 2D/3D velocity estimation \cite{9585321}.

The cooperation between the BSs or APs in communication can be classified into three types: \ac{JT}, \ac{CS}/\ac{CB}, and \ac{DPS} \cite{9318091}. JT implies that all the BSs/APs are coordinated to transmit data to the same UE simultaneously, which can exploit coherent combining and improve the received \ac{SINR} and data rate. In CS/CB, each UE is served by a single transmission point, while scheduling or \ac{BF} decisions are coordinated across multiple points to mitigate \ac{ICI}.
In \ac{DPS}, the serving transmission point is dynamically selected from a coordinated set, and UE data is transmitted from only one point at a given time. Moreover, in \cite{10599241}, we summarized three types of cooperative sensing: \ac{MMS}, \ac{MBS}, and their hybrid, named \ac{MXS}, which implies that each node can extract sensing information from signals transmitted by itself and other nodes.
\vspace{-1em}

\subsection{Related Works}
To fully exploit the potential of ISAC and improve its efficiency, RRA, especially the \ac{PA} and \ac{BF} (spatial allocation), plays an important role \cite{10233825}. 
Studies primarily choose the \ac{SINR} or \ac{SE} as the communication metric, while studies on sensing metrics have drawn a lot of attention in recent years, where sensing \ac{SNR}/SINR, detection probability, and \ac{CRLB} are widely investigated.
Studies in \cite{9667503,10308585,10742291,10437290,10015103,10000730,10494224,10113889} employ the \ac{SNR} or SINR as the sensing metric. 
The study in \cite{9667503} considers BF in a two \acp{BS} scenario, in which one BS serves as the ISAC transmitter and receiver (TRx), while the other performs passive sensing. 
\Ac{ICI} suppression is addressed in \cite{10308585}, where the communication system operates in a CB mode, and the sensing type of MMS is considered.
In \cite{10742291}, \ac{SDP}-based BF and PA are designed for a cell-free MIMO ISAC system, where the BF is employed as a benchmark for the evaluation of PA performance, while the PA benefits from lower complexity. The system operates in a \ac{JT}-MXS mode, maximizing the channel capacity and information utilization efficiency.
The study in \cite{10437290} uses \ac{SCA} with lower complexity than SDP for BF. The study in \cite{10015103} introduces a joint BF and clustering problem, which is solved by SDP- and SCA-based algorithms, respectively. 
In \cite{10000730,10494224}, the authors designed a PA algorithm for a JT-MBS system, where dedicated APs are deployed for sensing reception. The BF employs classical \ac{RZF} for communication beams and \ac{NSP} for sensing beams.
A deep learning-based method can be found in \cite{10113889}, where each UE and target is associated with a single BS.

Related works with task-oriented sensing performance indicators primarily focus on detection performance or position/velocity estimation accuracy. While target detection probability is usually determined by the overall SINR or \ac{SNR} \cite{10304081,10380513}, estimation accuracy of multistatic sensing cannot be fully characterized by these metrics since the sensing accuracy is also affected by spatial characteristics like node distribution as well as target position and velocity \cite{su2026}. \ac{CRLB} is the most common metric applied in parameter estimation-oriented studies since it expresses the theoretical lower bound of \ac{MSE} of sensing parameters.  
Studies in \cite{10942860,10615966} provide solutions respectively for PA and BF with angle CRLB.
In \cite{10577579}, the authors actually address two scenarios: CB-MMS, where the BF is designed to suppress the ICI; JT-MXS, where the signals are coherently superposed at UEs, while the BSs can obtain sensing information from other BSs' signals. The problem is solved by an SDP-based algorithm, where the angle-based \ac{PEB} constraint is processed by Schur complement. On the contrary, studies in \cite{8835674,10292936} are based on single-antenna networks, where the position information is completely contributed by range information, and the convex optimization problem is formulated by relaxing the PEB constraint \cite{5753953}, leading to a more conservative solution. 
A similar PA problem is addressed in \cite{9842350}, where the \ac{SDR} technique avoids the conservatism but increases the complexity. The studies in \cite{10436719,11031778} address the PEB based on both range and angle information, and the solution is also based on the SDP with Schur complement. Addressing the requirement of velocity estimation, in \cite{xia2026}, a PA algorithm for single-antenna AP networks with the sensing constraints on both PEB and \ac{VEB} is designed, whereas the single-antenna structure abandons angle information.
\vspace{-1em}

\subsection{Contributions}
Motivated by the accuracy requirements for wireless sensing in 3GPP Release 19 \cite{3gpp22137}, this work employs both PEB and VEB as sensing metrics in BF and PA for cooperative ISAC networks. Related works \cite{10942860,10615966,10577579,8835674,10292936,9842350,10436719,11031778,xia2026} exhibit limitations in handling velocity constraints or position information integrity.
As future 6G mobile networks and ISAC systems will inherently adopt massive \ac{MIMO} and \ac{eMBB} technologies \cite{8869705,9737357}, the CRLB of networked ISAC should account for information of \ac{AoA}, delay/range, and Doppler/radial velocity. 
In our recent work \cite{su2026}, we derived full formulas of \ac{PEB} and \ac{VEB} for cooperative MIMO ISAC networks, where the coupling between the position and velocity information is captured. 
Addressing the tractability and complexity, simplified versions of PEB and VEB were developed with small deviation from the full CRLB. Building on these results, this work leverages the derived CRLBs as the sensing metric in BF and PA designs. 

We consider the \ac{DL} transmission, where the \ac{eMBB} and massive MIMO technologies result in much higher spatial resolution than \ac{UL} sensing, and the synchronization and privacy issues are avoided \cite{9585321}. The system is assumed to operate in a JT-MXS mode, where the coherent transmission benefits from improved SE, and the MXS enhances spatial diversity and accuracy to the greatest extent possible. The main contributions are summarized as follows:
\begin{itemize}
    \item \textbf{CRLB for networked ISAC:} Based on the results in \cite{su2026}, we reformulate the expression of full and simplified PEB and VEB into optimization-oriented forms. {These metrics directly characterize position and velocity estimation accuracy.
    While the full version of PEB and VEB precisely capture the coupling between them and provide rigorous accuracy indicators, the simplified one benefits from better tractability and potential for low-complexity algorithms.}
    \item \textbf{Full CRLB-based beamforming design:} We consider a power-minimization BF design, 
    subject to communication SINR constraints, sensing PEB and VEB requirements, and power limitation per BS. 
    We then reformulate the optimization as an SDP problem by lifting the BF vectors into covariance matrices and applying SDR, where the constraints on full PEB and VEB are addressed by double Schur complements.
    We further prove the tightness of the SDR through the \ac{KKT} conditions. 
    The proposed SDP formulation benefits from full \ac{DoF} while producing high computation effort. Hence, it is expected to provide high-quality service in low-mobility or quasi-static scenarios. 
    \item \textbf{Simplified CRLB-based PA design:} Addressing the requirements for timeliness in practical wireless networks, we develop a low-complexity two-stage PA solution based on fixed BF vectors. 
    To reduce complexity, we employ simplified PEB and VEB for sensing constraints, which can be handled by \ac{SOCP}.
    Specifically, classical \ac{RZF} is adopted for communication BF, while \ac{NSP}-based sensing BF cancels the interference to communication \acp{UE}, enabling a decoupled two-stage design.
    In the first stage, the communication PA is handled by the \ac{SCA}-\ac{SOCP} procedure. 
    In the second stage, the sensing constraints are addressed by SOCP with the remaining power.
    The resulting optimization problem benefits from low computation complexity. 
    \item \textbf{Performance investigation:} The simulation results show that the proposed PA algorithm incurs only a negligible performance loss compared with the SDP-BF scheme, while the average computation time of 25\,ms reflects its high efficiency and applicability in dynamic environments.
    \item \textbf{Facilitating ISAC deployment:} The results show that stringent PEB and VEB requirements can be satisfied with only marginal sensing power. This implies that the sensing task of parameter estimation may be integrated into wireless networks with negligible impact on communication performance and total power consumption. This finding is expected to facilitate the deployment of ISAC in wireless networks.
\end{itemize}

The rest of this paper is organized as follows: Section~\ref{sec.model} introduces the system model. The formulas of CRLB are adapted for beamforming in Section~\ref{sec.crlb}. The beamforming design is presented in Section~\ref{sec.BF}, followed by the PA algorithm introduced in Section~\ref{sec.PA}. The simulation results are provided in Section~\ref{sec.simulation}. 
Section~\ref{sec.conclusion} summarizes this work.

\textit{Notations:}  $(\cdot)^{*}$, $(\cdot)^{T}$, $(\cdot)^{H}$, and $(\cdot)^{-1}$ denote the conjugate, transpose, conjugate transpose, and inverse, respectively. 
$a$ and $A$ represent scalars, $\mathbf{a}$ and $\mathbf{A}$ represent vectors and matrices, respectively. 
The diagonalization of vector $\mathbf{a}$ is written by $\mathrm{diag}$($\mathbf{a}$). 
$\text{tr}(\cdot)$ denotes the trace of a square matrix.
$\mathcal{R}\{\cdot\}$ implies the real-part operator.
$\mathcal{CN}(\mu,\sigma^2)$ denotes the complex normal distribution with mean and variance of $\mu$ and $\sigma^2$.
\vspace{-1em}

\section{System Model}\label{sec.model}
Fig.~\ref{fig:model} illustrates an example of the cooperative ISAC network, where the planar position and velocity are involved in the considered traffic scenario. 
The BSs are assumed to be centrally coordinated by a \ac{CU} and thus are fully synchronized and are aware of all the transmission signals. 
We assume that all the $N$ BSs are performing joint transmission to the $U$ single-antenna UEs and simultaneously transmitting dedicated sensing signals to the $Q$ targets. 
Each BS receives target-reflected sensing waveforms transmitted by itself for monostatic sensing and those transmitted by others for bistatic sensing.
The positions of the $n$-th BS and $q$-th target are represented in Cartesian coordinates $\mathbf{p}_n=[x_n,y_n]^T$ and $\mathbf{p}_q=[x_q,y_q]^T$, and the velocity of the target is represented by $\mathbf{v}_q=[v_{x,q},v_{y,q}]^T$. 
We assume all the \ac{Tx} and sensing \ac{Rx} antennas of BSs are \acp{ULA} with sizes of $N_\text{t}$ and $N_\text{r}$, respectively. The inter-element spacing is set to $d=\lambda/2$, where $\lambda$ denotes the wavelength.
The \ac{AoD} and \ac{AoA} denote the angle $\theta$ between the target \ac{LoS} and the Tx/Rx antenna normal, and $\varphi$ denotes the bearing angle, i.e., the angle between the target LoS and the $x$-axis. Further, we assume the perfect \ac{CSI} is known. For sensing, we consider the scenario of target estimation, 
where the BF and PA optimization adopt target parameters estimated from the previous frame or predicted by tracking, and the task is to optimize sensing accuracy by minimizing the CRLB.


\begin{figure}
    \centering
    \includegraphics[width=0.8\linewidth]{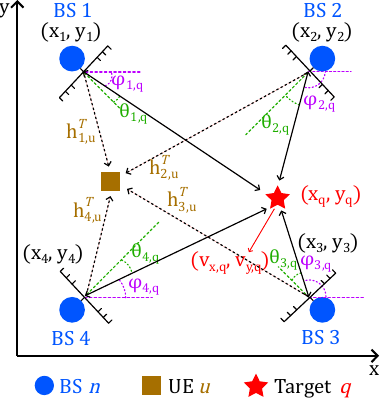}
    \caption{An example of the considered ISAC system model with JT for communication and MXS for sensing.}
    \label{fig:model}
\end{figure}

The transmitted signal from the $n$-th BS is given by
\begin{align}
    \mathbf{s}_n(t)=\sum_{u=1}^U \mathbf{w}_{n,u}{s}_u(t)+\sum_{q=1}^Q\mathbf{w}_{n,q}{s}_{n,q}(t),
\end{align}
where $\mathbf{w}_{n,u},\mathbf{w}_{n,q}\in\mathcal{C}^{N_\text{t}\times 1}$ are the communication and sensing BF vectors of BS $n$ for UE $u$ and target $q$, respectively, while the corresponding signals are denoted by ${s}_u(t)$ and ${s}_{n,q}(t)$, with the same power $P_s=0$\,dBW. The radar signals are orthogonal to each other and also to communication signals for ease of separation at the Rxs \cite{4408448}, i.e., 
\begin{align}\label{eq.orthogonal}
    &\int {s}_u(t)s_{n,q}^*(t-t_0)\,dt={0},\nonumber\ \int s_{n,q}(t)s_{n',q'}^*(t- t_0)\,dt=0,\\ 
    &\ \forall n,u,q,t_0,(n,q)\ne(n',q'),
\end{align}
which can be guaranteed by the zero-mean and independent characteristics of signals \cite{10380513}.

The received signal at the $u$-th UE is given by
\begin{align}\label{eq.resu}
    r_u(t)=&\sum_{n=1}^N\mathbf{h}_{n,u}^T\mathbf{w}_{n,u}{s}_u(t)+\sum_{n=1}^{N}\sum_{\substack{u'=1\\u'\ne u}}^{U}\mathbf{h}_{n,u}^T\mathbf{w}_{n,u'}{s}_{u'}(t)\nonumber\\
    &+\sum_{n=1}^{N}\sum_{q=1}^{Q}\mathbf{h}_{n,u}^T\mathbf{w}_{n,q}{s}_{n,q}(t)+n_u(t),
\end{align}
where $\mathbf{h}_{n,u}^T$ represents the channel parameter from BS $n$ to UE $u$. $n_u(t)\sim\mathcal{CN}(0,\sigma_u^2)$ denotes the \ac{AWGN} at UE $u$. The first term in (\ref{eq.resu}) is the desired signal, while the second and third terms represent the \ac{MUI} and \ac{SI}, respectively.
The SINR of the $u$-th UE is then calculated by
\begin{align}\label{eq.sinru}
    \gamma_u\negthinspace=\negthinspace\frac{\Big|\sum\limits_{n=1}^N\mathbf{h}_{n,u}^T\mathbf{w}_{n,u}\Big|^2}{\negthinspace\sum\limits_{\substack{u'=1\\u'\ne u}}^{U}\negthinspace\Big|\negthinspace\sum\limits_{n=1}^{N}\mathbf{h}_{n,u}^T\mathbf{w}_{n,u'}\Big|^2\negthinspace+\negthinspace\sum\limits_{n=1}^{N}\sum\limits_{q=1}^{Q}\negthinspace\Big|\mathbf{h}_{n,u}^T\mathbf{w}_{n,q}\Big|^2\negthinspace+\negthinspace\sigma_u^2}.\negthinspace
\end{align}

Assuming Gaussian signaling, the relationship between \ac{SE} and SINR can be represented by $\text{SE}_u=\log_2(1+\gamma_u)$. 
Since \ac{SE} is a monotonically increasing function of \ac{SINR}, maximizing or constraining \ac{SE} can be equivalently transformed into maximizing or constraining \ac{SINR}.
Therefore, we choose SINR as the communication metric in BF design.

The orthogonality defined in (\ref{eq.orthogonal}) eliminates the influence of the interference from communication streams and other sensing signals in the \ac{FIM} \cite{11570946}. Hence, by excluding communication interference and \ac{ITI}, 
the received sensing signal at the $m$-th BS is given by
\begin{align}
    &\mathbf{r}_{m}(t)
    =\nonumber\\
    &\negthinspace\sum_{n=1}^N\negthinspace\sum_{q=1}^Q\negthinspace{A_{n,m,q}\mathbf{a}_{m}^*\negthinspace(\negthinspace\theta_{m,q}\negthinspace)\mathbf{b}_{n}^{\negthinspace H}\negthinspace(\theta_{n,q})\mathbf{w}_{n,q}{s}_{n,q}(t\negthinspace-\negthinspace\tau_{n,m,q})e^{j2\pi f_{n,m,q}^\text{D}t}}\negthinspace\nonumber\\
    &+\mathbf{n}_{m}(t),\label{eq.radrec}
\end{align}
where the pair $(n,m,q)$ denotes the sensing link corresponding to Tx $n$, Rx $m$, and target $q$.
$A_{n,m,q}=a_{n,m,q}e^{j\phi_{n,m,q}}$ denotes the complex channel amplitude, $a_{n,m,q}=\sqrt{{\lambda^2\varsigma_{n,m,q} }/({(4\pi)^3(r_{n,q}r_{m,q})^2})}$, while $\varsigma_{n,m,q},\ r_{n,q},\text{ and }r_{m,q}$ represent the \ac{RCS} and ranges from Tx to target and from target to Rx in this link. $\mathbf{b}_{n}(\theta_{n,q})$ and $\mathbf{a}_{m}(\theta_{m,q})$ represent the Tx and Rx steering vectors, respectively. $\tau_{n,m,q}$ and $f_{n,m,q}^\text{D}$ are the delay and Doppler shift. $\mathbf{n}_m(t)\sim\mathcal{CN}(0,\sigma_m^2\mathbf{I}_{N_\text{r}})$ denotes the AWGN. In this work, we assume the noise power of each Rx is equal, i.e., $\sigma_m^2=\sigma_\text{s}^2,\forall m$. 

\section{CRLB of Networked ISAC}\label{sec.crlb}
We choose the CRLB as the sensing metric in this work, which is obtained by the inverse of the \ac{FIM}. 
The \ac{FIM} quantifies the amount of information carried by the received signals about the unknown target parameters, i.e., position $(x,y)$ and velocity $(v_x,v_y)$.
The discussion of FIM and CRLB in this work is based on two preconditions:
\begin{itemize}
    \item The multi-target coupling in the FIM is negligible, hence the FIMs can be constructed for each target independently. It is proved in \cite{11481148} that \textit{a sufficient condition for multiple-target CRLB approaching the corresponding approximate or single-target CRLB is that the delay or Doppler difference between the targets is larger than the corresponding resolution cell size}. This condition is generally satisfied. 
    For instance, for an FR3 waveform at $7.5$\,GHz with a bandwidth of 100\,MHz and frame length of 10\,ms, the range and velocity resolutions of monostatic sensing are 1.5\,m and 2\,m/s, being small enough.
    Consequently, the reflection of each sensing signal $s_{n,q}(t)$ only by the corresponding target $q$ is included in (\ref{eq.radrec}), and the FIM of target $q$ is constructed only based on reflected $s_{n,q}(t),\forall n$.
    \item For each target, the observation of each link $(n,m)$ can be treated as independent; hence, the global FIM can be calculated by the sum of the local FIMs. This condition holds with orthogonal Tx sensing waveforms and independent AWGN across Rxs, since on the one hand, it can be deduced from \cite{11570946} that \textit{for MIMO radar systems with orthogonal transmitted signals and AWGN, the observations of different transmitted signals at the same Rx are independent after matched filtering with the corresponding orthogonal waveforms}; on the other hand, the observations among different Rxs are independent due to the independent AWGN. These lead to independent observations for each link $(n,m)$.
\end{itemize}

In our recent work \cite{su2026}, we derived the full formula of the FIM. 
For target $q$, the FIM is given by
\begin{align}\label{eq.FIM}{
    \mathbf{F}_q=\sum_{n=1}^N\sum_{m=1}^M\mathbf{J}_{n,m,q}^T\mathbf{E}_{n,m,q}\mathbf{J}_{n,m,q}=\begin{bmatrix}
        \mathbf{F}_{\text{P},q} & \mathbf{F}_{\text{PV},q}\\
        \mathbf{F}_{\text{PV},q}^T & \mathbf{F}_{\text{V},q}
    \end{bmatrix},
}\end{align}
where $\mathbf{F}_{\text{P},q}$ and $\mathbf{F}_{\text{V},q}$ are the position and velocity blocks, $\mathbf{F}_{\text{PV},q}$ denotes the coupling between the position and velocity information, which is usually overlooked in related works \cite{10942860,10615966,10577579,8835674,10292936,9842350,10436719,11031778}. $\mathbf{F}_q$ is a real symmetric matrix. $\mathbf{E}_{n,m,q}$ expresses the \ac{EFIM} of the signal-level parameters (range $r$, radial velocity $v_\text{r}$, and AoA $\theta$). In our work \cite{11570946}, we have shown that for common ISAC waveforms like FMCW, sinc-PMCW, OFDM, and OTFS, we have
\begin{align}\label{eq.sige}
    \mathbf{E}_{n,m,q}\approx\text{diag}([E_{r,n,m,q},E_{v_\text{r},n,m,q}, E_{\theta,n,m,q}]),
\end{align}
where
\begin{align}\label{eq.siglvcrlb}
    &E_{r,n,m,q}=\frac{2\pi^2N_\text{r}\gamma_{n,m,q} B^2}{3c_0^2},\quad E_{v_\text{r},n,m,q}=\frac{2\pi^2N_\text{r}\gamma_{n,m,q} T_\text{F}^2}{3\lambda^2},\nonumber\\ &E_{\theta,n,m,q}=\frac{\pi^2\cos^2(\theta_{m,q})N_\text{r}(N_\text{r}^2-1)\gamma_{n,m,q}}{6},\nonumber\\
    &\gamma_{n,m,q}=\frac{a_{n,m,q}^2|\mathbf{b}_{n}^H(\theta_{n,q})\mathbf{w}_{n,q}|^2E_{s}}{\sigma^2_\text{s}},
\end{align}
where $c_0$ is the speed of light in vacuum.
$E_{s}=\sum_i|s_{n,q}[i]|^2=N_{s}P_s,\forall n,q$ denotes the energy of sampled signal, $N_s$ is the number of samples.
In this work, we assume all the BSs have the same frame length $T_\text{F}$, bandwidth $B$, and antenna size $N_\text{t},N_\text{r}$. 
$\mathbf{J}_{n,m,q}$ denotes the Jacobian transformation matrix:
\begin{align}
    \mathbf{J}_{n,m,q}&=\begin{bmatrix}
        \frac{\partial r_{n,m,q}}{\partial x} & \frac{\partial r_{n,m,q}}{\partial y} & \frac{\partial r_{n,m,q}}{\partial v_x} & \frac{\partial r_{n,m,q}}{\partial v_y}\\
        \frac{\partial v_{\text{r},n,m,q}}{\partial x} & \frac{\partial v_{\text{r},n,m,q}}{\partial y} & \frac{\partial v_{\text{r},n,m,q}}{\partial v_x} & \frac{\partial v_{\text{r},n,m,q}}{\partial v_y}\\
        \frac{\partial \theta_{m,q}}{\partial x} & \frac{\partial \theta_{m,q}}{\partial y} & \frac{\partial \theta_{m,q}}{\partial v_x} & \frac{\partial \theta_{m,q}}{\partial v_y}
    \end{bmatrix}.
\end{align}

The elements of $\mathbf{J}_{n,m,q}$ are calculated in the following. For the link $(n,m,q)$, the signal-level parameters are defined by
\begin{subequations}\begin{align}
    &{r_{n,m,q}}={r_{n,q}+r_{m,q}}={\Vert\mathbf{p}_q-\mathbf{p}_n\Vert+\Vert\mathbf{p}_q-\mathbf{p}_m\Vert}\nonumber\\
    &\negthinspace=\negthinspace{\sqrt{\Delta (x_{n,q})^{2}\negthinspace+\negthinspace(\Delta y_{n,q})^2}\negthinspace+\negthinspace\sqrt{\Delta (x_{m,q})^{2}\negthinspace+\negthinspace(\Delta y_{m,q})^2}},\negthinspace\\
    &{v_{\text{r},n,m,q}}=v_{\text{r},n,q}+v_{\text{r},m,q}=\mathbf{v}^T_q(\mathbf{u}_{n,q}+\mathbf{u}_{m,q})\nonumber\\
    &\negthinspace=\negthinspace v_{x,q}(\cos(\varphi_{n,q})\negthinspace+\negthinspace\cos(\varphi_{m,q}))\negthinspace+\negthinspace v_{y,q}(\sin(\varphi_{n,q})\negthinspace+\negthinspace\sin(\varphi_{m,q})),\\
    &\negthinspace\theta_{m,q}\negthinspace=\negthinspace\varphi_{m,q}\negthinspace-\negthinspace\varphi_{0,m}\negthinspace=\negthinspace\arctan({\Delta y_{m,q}}/{\Delta x_{m,q}})-\varphi_{0,m},\negthinspace
\end{align}\end{subequations}
where $\Delta x_{n,q}=x_q-x_n,\ \Delta y_{n,q}=y_q-y_n$. $\varphi_{0,m}$ represents the antenna normal direction of BS $m$. 
$\mathbf{u}_{n,q}=[\cos(\varphi_{n,q}),\sin(\varphi_{n,q})]^T=[c_{n,q},s_{n,q}]^T$ denotes the radial direction vector between BS $n$ and target $q$. The total range and radial velocity of the link $(n,m,q)$ are denoted by $r_{n,m,q}=r_{n,q}+r_{m,q}$ and $v_{\text{r},n,m,q}=v_{\text{r},n,q}+v_{\text{r},m,q}$.

The Jacobian matrix can be rewritten as
\begin{align}{
    \mathbf{J}_{n,m,q}=\begin{bmatrix}
        c_{n,m,q} & s_{n,m,q} & 0 & 0\\
        a_{n,m,q} & b_{n,m,q} & c_{n,m,q} & s_{n,m,q}\\
        -\frac{s_{m,q}}{r_{m,q}} & \frac{c_{m,q}}{r_{m,q}} & 0 & 0
    \end{bmatrix}\negthinspace,\negthinspace}
\end{align}
where 
\begin{align}\label{eq.mbscomp}
    &c_{n,m,q}=c_{n,q}+c_{m,q},\ s_{n,m,q}=s_{n,q}+s_{m,q},\nonumber\\
    &a_{n,m,q}=a_{n,q}+a_{m,q},\ b_{n,m,q}=b_{n,q}+b_{m,q},\nonumber\\
    &\mathbf{u}_{n,m,q}=\mathbf{u}_{n,q}+\mathbf{u}_{m,q},\ c_{n,q}=\frac{\Delta x_{n,q}}{r_{n,q}},\ s_{n,q}=\frac{\Delta y_{n,q}}{r_{n,q}},\nonumber\\
    &a_{n,q}\negthinspace=\negthinspace\frac{ v_{x,q}}{ r_{n,q}}-\frac{(\Delta x_{n,q}v_{x,q}+\Delta y_{n,q}v_{y,q})\Delta x_{n,q}}{(r_{n,q})^3}\negthinspace=\negthinspace-\frac{ s_{n,q}}{ r_{n,q}}v_{\perp n,q},\negthinspace\nonumber\\ 
    &b_{n,q}=\frac{ v_{y,q}}{ r_{n,q}}-\frac{(\Delta x_{n,q}v_{x,q}+\Delta y_{n,q}v_{y,q})\Delta y_{n,q}}{(r_{n,q})^3}=\frac{ c_{n,q}}{ r_{n,q}}v_{\perp n,q},\nonumber\\ 
    &v_{\perp n,q}=\mathbf{v}^T\mathbf{u}_{\perp n,q},\ \mathbf{u}_{\perp n,q}=[-s_{n,q},c_{n,q}]^T,\nonumber\\ &v_{\perp n,m,q}=v_{\perp n,q}+v_{\perp m,q},\ \mathbf{u}_{\perp n,m,q}=\mathbf{u}_{\perp n,q}+\mathbf{u}_{\perp m,q}.\negthinspace
\end{align}
In consequence, the elements of $\mathbf{F}_{\text{P},q}$, $\mathbf{F}_{\text{V},q}$, and $\mathbf{F}_{\text{PV},q}$ are given by
\begin{subequations}\label{eq.elements}\begin{align}
    &\mathbf{F}_{\text{P},q}=\sum_{n=1}^N|\mathbf{b}_n^H(\theta_{n,q})\mathbf{w}_{n,q}|^2\sum_{m=1}^N a_{n,m,q} ^2  \Big( E_{\text{r}}'\mathbf{U}_{n,m,q}\nonumber\\
    &+\frac{1}{(r_{m,q})^2}\cos^2(\theta_{m,q})E_{\theta}'\mathbf{U}_{\perp m,q}+E_{\text{v}_\text{r}}'\mathbf{G}_{n,m,q}\Big)\nonumber\\
    &=\sum_{n=1}^N|\mathbf{b}_n^H(\theta_{n,q})\mathbf{w}_{n,q}|^2\mathbf{R}_{\text{P},n,q},\\
    &\mathbf{F}_{\text{V},q}=\sum_{n=1}^N|\mathbf{b}_n^H(\theta_{n,q})\mathbf{w}_{n,q}|^2\sum_{m=1}^N a_{n,m,q} ^2  E_{\text{v}_\text{r}}'\mathbf{U}_{n,m,q}\nonumber\\
    &=\sum_{n=1}^N|\mathbf{b}_n^H(\theta_{n,q})\mathbf{w}_{n,q}|^2\mathbf{R}_{\text{V},n,q}\\
    &\mathbf{F}_{\text{PV},q}=\sum_{n=1}^N|\mathbf{b}_n^H(\theta_{n,q})\mathbf{w}_{n,q}|^2 \sum_{m=1}^N a_{n,m,q} ^2 E_{\text{v}_\text{r}}' \mathbf{Q}_{n,m,q}\nonumber\\
    &=\sum_{n=1}^N|\mathbf{b}_n^H(\theta_{n,q})\mathbf{w}_{n,q}|^2\mathbf{R}_{\text{PV},n,q},
\end{align}\end{subequations}
where 
\begin{align}
    &\negthinspace E_{{r}}'\negthinspace=\negthinspace\frac{2\pi^2\negthinspace N_\text{r}B^2E_{s}}{3c_0^2\sigma^2_\text{s}},\, E_{v_\text{r}}'\negthinspace=\negthinspace\frac{2\pi^2\negthinspace N_\text{r}T_\text{F}^2E_{s}}{3\lambda^2\sigma^2_\text{s}},\, E_{\theta}'\negthinspace=\negthinspace\frac{\pi^2\negthinspace N_\text{r}(N_\text{r}^2\negthinspace-\negthinspace 1)E_{s}}{6\sigma^2_\text{s}},\negthinspace\nonumber\\
    &\mathbf{U}_{n,m,q}=\mathbf{u}_{n,m,q}\mathbf{u}_{n,m,q}^T,\ 
    \mathbf{U}_{\perp m,q}=\mathbf{u}_{\perp m,q}\mathbf{u}_{\perp m,q}^T,\nonumber\\ 
    &\mathbf{G}_{n,m,q}=\mathbf{g}_{n,m,q}\mathbf{g}_{n,m,q}^T,\ 
    \mathbf{Q}_{n,m,q}=\mathbf{g}_{n,m,q}\mathbf{u}_{n,m,q}^T,\nonumber\\
    &\mathbf{g}_{n,m,q}=[a_{n,m,q},b_{n,m,q}]^T.
\end{align}
$E_{\text{r}}'$, $E_{v_\text{r}}'$, and $E_{\theta}'$ are constant and are equal for all the links, the matrices $\mathbf{U}_{n,m,q}$, $\mathbf{U}_{\perp m,q}$, $\mathbf{G}_{n,m,q}$, and $\mathbf{Q}_{n,m,q}$ carry the geometrical information. $a_{n,m,q}^2$ indicates the path loss. As mentioned in Section~\ref{sec.model}, these parameters are assumed to be known and are sorted in $\mathbf{R}_{\text{P},n,q}$, $\mathbf{R}_{\text{V},n,q}$, and $\mathbf{R}_{\text{PV},n,q}$. Therefore, the elements in (\ref{eq.elements}) show a linear relationship with the BF gain. However, the BF gains of different Txs have different information contributions in both amount and direction, which is not considered in conventional BF or PA optimizing the overall SNR.

The full \acp{EFIM} of position and velocity are obtained via the Schur complement by treating the other as the nuisance parameters:
\begin{subequations}\begin{align}\label{eq.fullefim}
    &\mathbf{P}_{q}=\mathbf{F}_{\text{P},q}-\mathbf{F}_{\text{PV},q}\mathbf{F}_{\text{V},q}^{-1}\mathbf{F}_{\text{PV},q}^T,\\ 
    &\mathbf{V}_{q}=\mathbf{F}_{\text{V},q}-\mathbf{F}_{\text{PV},q}^T\mathbf{F}_{\text{P},q}^{-1}\mathbf{F}_{\text{PV},q}.
\end{align}\end{subequations}

The full CRLBs of position and velocity are denoted by the squares of PEB and VEB:
\begin{align}\label{eq.crlb}
    &C_{\text{P},q}=\text{tr}(\mathbf{P}_{q}^{-1})=\text{PEB}^2_q,\ 
    C_{\text{V},q}=\text{tr}(\mathbf{V}_{q}^{-1})=\text{VEB}^2_q.
\end{align}

Addressing the high complexity of $\mathbf{P}_q$ and $\mathbf{V}_q$, in \cite{su2026}, we provided a simplified version. The theoretical analysis and simulation results demonstrate that the difference between the full and simplified CRLBs in value can be neglected in most cases. The simplified EFIMs are given by
\begin{subequations}\label{eq.simp}\begin{align}
    &\mathbf{P}_{q}'=\sum_{n=1}^N|\mathbf{b}_n^H(\theta_{n,q})\mathbf{w}_{n,q}|^2\sum_{m=1}^N a_{n,m,q} ^2 \Big( E_{\text{r}}'\mathbf{U}_{n,m,q}\nonumber\\
    &\qquad\ +\frac{1}{(r_{m,q})^2}\cos^2(\theta_{m,q})E_{\theta}'\mathbf{U}_{\perp m,q}\Big)\nonumber\\
    &\quad\ =\sum_{n=1}^N|\mathbf{b}_n^H(\theta_{n,q})\mathbf{w}_{n,q}|^2\mathbf{R}_{\text{P},n,q}',\\
    &\mathbf{V}_{q}'=\mathbf{F}_{\text{V},q}=\sum_{n=1}^N|\mathbf{b}_n^H(\theta_{n,q})\mathbf{w}_{n,q}|^2\mathbf{R}_{\text{V},n,q}
\end{align}\end{subequations}
where the simplified position and velocity information are decoupled by preserving only their respective information components and neglecting the coupling terms $\mathbf{F}_{\text{PV}}$. The corresponding simplified CRLBs are given by
\begin{align}
    \negthinspace C_{\text{P},q}'\negthinspace=\negthinspace\text{tr}((\mathbf{P}'_{q})^{-1})\negthinspace=\negthinspace(\text{PEB}_q')^2,\ 
    C_{\text{V},q}'\negthinspace=\negthinspace\text{tr}((\mathbf{V}_{q}')^{-1})\negthinspace=\negthinspace(\text{VEB}_q')^2.\negthinspace
\end{align}

In the next section, we develop an SDP-based BF algorithm based on the full CRLB, which serves as the benchmark due to its full \ac{DoF} and rigorous sensing performance constraint, and is expected to provide high-quality service in low-mobility or quasi-static scenarios. 
Addressing the complexity and practicality, in Section~\ref{sec.PA}, a low-complexity PA algorithm leveraging the simplified CRLB is proposed.

\section{SDP-Based Beamforming Design}\label{sec.BF}

We consider a power-minimization design, where the objective is to minimize the total transmission power (\ref{eq.oriopta}), subject to communication SINR constraints (\ref{eq.orioptb}), sensing PEB (\ref{eq.orioptc}) and VEB (\ref{eq.orioptd}) requirements, and power limitation per BS (\ref{eq.oriopte}). The optimization problem is formulated as
\begin{subequations}\label{eq.oriopt}\begin{align}
    \negthinspace(\text{P}1.1):\ &\min_{\mathbf{w}_{n,u},\mathbf{w}_{n,q}} \sum_{n=1}^N\negthinspace\Big(\negthinspace\sum_{u=1}^U ||\mathbf{w}_{n,u}||^2+\sum_{q=1}^Q||\mathbf{w}_{n,q}||^2\Big),\negthinspace\label{eq.oriopta}\\
    &\ \mathrm{s.t.}\ \gamma_u\ge \Gamma_u,\ \forall u,\label{eq.orioptb}\\
    &\ \quad \ \ C_{\text{P},q}\le \sigma_{\text{P},q}^2,\ \forall q,\label{eq.orioptc}\\
    &\ \quad \ \ C_{\text{V},q}\le \sigma_{\text{V},q}^2,\ \forall q,\label{eq.orioptd}\\
    &\ \quad \ \ \sum_{u=1}^U||\mathbf{w}_{n,u}||^2+\sum_{q=1}^Q||\mathbf{w}_{n,q}||^2\le P_\text{t},\ \forall n,\label{eq.oriopte}
\end{align}\end{subequations}
where $\Gamma_u$ denotes the required SINR of UE $u$, $\sigma_{\text{P},q}^2$ and $\sigma_{\text{V},q}^2$ represent the constraints on $\text{PEB}_q^2$ and $\text{VEB}_q^2$, $P_\text{t}$ expresses the identical maximum transmission power of each BS. 
Although the objective function and the per-BS power constraints are convex quadratic, this problem is non-convex due to the fractional SINR constraints and the CRLB constraints involving double matrix inversion operations. To obtain convexity, we adopt SDR and double Schur-complement transformations to recast (\ref{eq.oriopt}) into an SDP problem. 

Considering the coherent superposition at the UEs (\ref{eq.sinru}), we form the global BF vector $\mathbf{w}_u=[\mathbf{w}_{1,u}^T,...,\mathbf{w}_{N,u}^T]^T$ and channel $\mathbf{h}_u=[\mathbf{h}_{1,u}^T,...,\mathbf{h}_{N,u}^T]^T$ for each UE, and define their covariance matrices as $\mathbf{W}_u=\mathbf{w}_u\mathbf{w}_u^H,\ \mathbf{H}_u=\mathbf{h}_u^*\mathbf{h}_u^T,\ \mathbf{H}_{n,u}=\mathbf{h}_{n,u}^*\mathbf{h}_{n,u}^T$, and the BF matrices should satisfy $\mathbf{W}_u\succeq0$ and $\mathrm{rank}(\mathbf{W}_u)=1$. The covariance BF matrix of sensing $\mathbf{W}_{n,q}=\mathbf{w}_{n,q}\mathbf{w}_{n,q}^H$ should also satisfy $\mathbf{W}_{n,q}\succeq0$ and $\mathrm{rank}(\mathbf{W}_{n,q})\le1$. With lifted covariance matrices, the quadratic terms are transformed into linear trace expressions, while the CRLB constraints can be represented by \acp{LMI} through the Schur complement.

The objective function in (\ref{eq.oriopta}) is rewritten as
\begin{align}\label{eq.oriopta1}
    \sum_{u=1}^U \mathrm{tr}(\mathbf{W}_{u})+\sum_{n=1}^N\sum_{q=1}^Q \mathrm{tr}(\mathbf{W}_{n,q}).
\end{align}

The communication SINR constraint in (\ref{eq.orioptb}) is transformed into an affine inequality with respect to the lifted covariance matrices:
\begin{align}\label{eq.orioptb1}
    &\negthinspace\frac{1}{\Gamma\negthinspace_u}\negthinspace\mathrm{tr}(\mathbf{H}_u\negthinspace\mathbf{W}\negthinspace_{u})\negthinspace-\negthinspace\negthinspace\sum_{\substack{u'=1\\u'\ne u}}^U\negthinspace\mathrm{tr}(\mathbf{H}_u\negthinspace\mathbf{W}\negthinspace_{u'})\negthinspace-\negthinspace\negthinspace\sum_{n=1}^N\negthinspace\sum_{q=1}^Q\negthinspace \mathrm{tr}(\mathbf{H}_{n,u}\mathbf{W}\negthinspace_{n,q})\negthinspace\negthinspace\ge\negthinspace\negthinspace\sigma_{\negthinspace u}^2,\negthinspace
\end{align}

Similarly, the inequality in (\ref{eq.oriopte}) can be rewritten as
\begin{align}\label{eq.oriopte1}
    \sum_{u=1}^U\mathrm{tr}(\mathbf{D}_n\mathbf{W}_{u})+\sum_{q=1}^Q\mathrm{tr}(\mathbf{W}_{n,q})\le P_\text{t},
\end{align}
where $\mathbf{D}_n=\mathrm{diag}([\underbrace{0,...,0}_{(n-1)N_\text{t}},\underbrace{1,...,1}_{N_\text{t}},\underbrace{0,...,0}_{(N-n)N_\text{t}}])$ represents the block selector matrix.

The constraints of CRLB are handled in the following. The FIM elements in (\ref{eq.elements}) are rewritten by
\label{eq.elements1}\begin{align}
    &\negthinspace\mathbf{F}\negthinspace_{\text{P},q}\negthinspace=\negthinspace\sum_{n=1}^N\negthinspace\mathrm{tr}(\mathbf{B}_{n,q}\negthinspace\mathbf{W}\negthinspace_{n,q})\mathbf{R}\negthinspace_{\text{P},n,q},\mathbf{F}\negthinspace_{\text{V},q}\negthinspace=\negthinspace\sum_{n=1}^N\negthinspace\mathrm{tr}(\mathbf{B}_{n,q}\negthinspace\mathbf{W}\negthinspace_{n,q})\mathbf{R}_{\text{V},n,q},\negthinspace\nonumber\\
    &\mathbf{F}_{\text{PV},q}=\sum_{n=1}^N\mathrm{tr}(\mathbf{B}_{n,q}\mathbf{W}_{n,q})\mathbf{R}_{\text{PV},n,q},
\end{align}
where $\mathbf{B}_{n,q}=\mathbf{b}_{n}(\theta_{n,q})\mathbf{b}_{n}^H(\theta_{n,q})$ is the covariance matrix of the Tx steering vector. $\mathbf{F}_{\text{P},q}$, $\mathbf{F}_{\text{V},q}$, and $\mathbf{F}_{\text{PV},q}$ are the affine matrices of $\mathbf{W}_{n,q}$, but $\mathbf{P}_q$, $\mathbf{V}_q$ and $C_{\text{P},q}$, $C_{\text{V},q}$ are not. To obtain an SDP formulation, we introduce auxiliary matrices $\mathbf X_{\mathrm P,q}$, $\mathbf Z_{\mathrm P,q}$, $\mathbf X_{\mathrm V,q}$, and $\mathbf Z_{\mathrm V,q}$, which satisfy
\begin{subequations}\begin{align}
    &\negthinspace\mathbf{P}_{q}^{-1}\negthinspace\preceq\negthinspace \mathbf{X}_{\text{P},q}^{-1} \negthinspace\preceq\negthinspace\mathbf{Z}_{\text{P},q}\Rightarrow \text{tr}(\mathbf{P}_{q}^{-1})\negthinspace\le\negthinspace\text{tr}(\mathbf{X}_{\text{P},q}^{-1})\negthinspace\le\negthinspace \text{tr}(\mathbf{Z}_{\text{P},q})\negthinspace\le\negthinspace\sigma_{\text{P},q}^2,\negthinspace\\
    &\negthinspace\mathbf{V}_{q}^{-1}\negthinspace\preceq\negthinspace \mathbf{X}_{\text{V},q}^{-1} \negthinspace\preceq\negthinspace\mathbf{Z}_{\text{V},q}\Rightarrow \text{tr}(\mathbf{V}_{q}^{-1})\negthinspace\le\negthinspace\text{tr}(\mathbf{X}_{\text{V},q}^{-1})\negthinspace\le\negthinspace \text{tr}(\mathbf{Z}_{\text{V},q})\negthinspace\le\negthinspace\sigma_{\text{V},q}^2.\negthinspace
\end{align}\end{subequations}

The constraints $C_{\text{P},q}\le \sigma_{\text{P},q}^2$ and $C_{\text{V},q}\le \sigma_{\text{V},q}^2$ are equivalently represented by respective double Schur complements:
\begin{subequations}\label{eq.orioptcd1}\begin{align}
    & {\setlength{\arraycolsep}{2pt}\begin{bmatrix}
        \mathbf{F}_{\text{P},q}\negthinspace-\negthinspace\mathbf{X}_{\text{P},q} & \mathbf{F}_{\text{PV},q}\\ \mathbf{F}_{\text{PV},q}^T & \mathbf{F}_{\text{V},q}
    \end{bmatrix}\negthinspace\succeq\negthinspace 0,\ 
    \begin{bmatrix}
        \mathbf{Z}_{\text{P},q} & \mathbf{I} \\ \mathbf{I} & \mathbf{X}_{\text{P},q}
    \end{bmatrix}\negthinspace\succeq\negthinspace 0,\ \text{tr}(\mathbf{Z}_{\text{P},q})\negthinspace\le\negthinspace\sigma_{\text{P},q}^2},\negthinspace\\
    & {\setlength{\arraycolsep}{2pt}\begin{bmatrix}
        \mathbf{F}_{\text{V},q}\negthinspace-\negthinspace\mathbf{X}_{\text{V},q} & \mathbf{F}_{\text{PV},q}^T\\ \mathbf{F}_{\text{PV},q} & \mathbf{F}_{\text{P},q}
    \end{bmatrix}\negthinspace\succeq\negthinspace 0,\ 
    \begin{bmatrix}
        \mathbf{Z}_{\text{V},q} & \mathbf{I} \\ \mathbf{I} & \mathbf{X}_{\text{V},q}
    \end{bmatrix}\negthinspace\succeq\negthinspace 0,\ \text{tr}(\mathbf{Z}_{\text{V},q})\negthinspace\le\negthinspace\sigma_{\text{V},q}^2.\negthinspace}
\end{align}\end{subequations}

Hence, the CRLB constraints are transformed into \acp{LMI}.
By collecting the transformed formulas in (\ref{eq.oriopta1})-(\ref{eq.oriopte1}) and (\ref{eq.orioptcd1}), the resulting optimization problem is given by 
\begin{subequations}\begin{align}
    \negthinspace(\text{P}1.2):\ &\min_{\mathbf{W}_{u},\mathbf{W}_{n,q},\mathbf{X}_{\text{P},q},\mathbf{Z}_{\text{P},q},\mathbf{X}_{\text{V},q},\mathbf{Z}_{\text{V},q}} (\ref{eq.oriopta1}),\label{eq.oriopta2}\\
    &\ \mathrm{s.t.}\ (\ref{eq.orioptb1}),(\ref{eq.oriopte1}),(\ref{eq.orioptcd1}),\ \forall n,u,q,\label{eq.orioptb2}\\
    &\ \quad \ \ \mathbf{W}_{u},\mathbf{W}_{n,q},\mathbf{X}_{\text{P},q},\mathbf{Z}_{\text{P},q},\mathbf{X}_{\text{V},q},\mathbf{Z}_{\text{V},q}\negthinspace\succeq\negthinspace 0,\ \forall n,u,q,\negthinspace\label{eq.orioptc2}\\
    &\ \quad \ \ \mathrm{rank}(\mathbf{W}_u)=1,\ \forall u,\label{eq.orioptd2}\\
    &\ \quad \ \ \mathrm{rank}(\mathbf{W}_{n,q})\le1,\ \forall n,q.\label{eq.oriopte2}
\end{align}\end{subequations}
However, (P1.2) is still non-convex due to the rank constraints. To obtain convexity, we apply SDR by dropping the rank-1 constraints (\ref{eq.orioptd2})-(\ref{eq.oriopte2}), resulting in an SDP optimization problem, where the objective function is linear in the covariance BF matrices, and all the constraints are \acp{LMI} or linear affine inequalities. The SDP problem is reformulated as
\begin{align}\label{eq.P1.3}
    (\text{P}1.3):\ \text{The same as P1.2, without (\ref{eq.orioptd2})-(\ref{eq.oriopte2}).}
\end{align}
The tightness of SDR is characterized in the following theorem.
\begin{Theorem}\label{t.1}
    The SDR in (\ref{eq.P1.3}) is tight, and any optimal solution of $\mathbf{W}_u,\mathbf{W}_{n,q}$ satisfies $\mathrm{rank}(\mathbf{W}_u)=1,\mathrm{rank}(\mathbf{W}_{n,q})\le1,\forall n,u,q$.
\end{Theorem}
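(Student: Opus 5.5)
The plan is to use the standard KKT-based rank argument for SDR. (P1.3) is a convex SDP, so I will first assume that it is strictly feasible. This is the usual Slater assumption: some point satisfies the SINR, per-BS power and CRLB constraints with margin, which holds whenever the requirements $\Gamma_u$, $\sigma_{\text{P},q}^2$, $\sigma_{\text{V},q}^2$ are not at the boundary of achievability. Under this assumption strong duality holds, and every optimal primal solution satisfies the KKT conditions together with some optimal dual variables.

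The dual variables are:
\begin{itemize}
\item $\lambda_u\ge0$ for the SINR constraints (\ref{eq.orioptb1});
\item $\mu_n\ge0$ for the per-BS power constraints (\ref{eq.oriopte1});
\item PSD multipliers $\mathbf{\Lambda}_{\text{P},q},\mathbf{\Lambda}_{\text{V},q}\succeq0$ for the two $4\times4$ FIM-type LMIs in (\ref{eq.orioptcd1}), plus multipliers for the remaining LMIs and trace constraints on $\mathbf{X}$ and $\mathbf{Z}$;
\item $\mathbf{Y}_u,\mathbf{Y}_{n,q}\succeq0$ for $\mathbf{W}_u\succeq0$ and $\mathbf{W}_{n,q}\succeq0$.
\end{itemize}

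For the communication matrices, I will differentiate the Lagrangian with respect to $\mathbf{W}_u$. Note that $\mathbf{W}_u$ enters its own SINR constraint with weight $1/\Gamma_u$, enters the other users' constraints as interference, and enters the per-BS power constraints through $\mathbf{D}_n$. Stationarity then gives
\begin{align*}
\mathbf{Y}_u=\mathbf{I}+\sum_{n=1}^N\mu_n\mathbf{D}_n+\sum_{u'\ne u}\lambda_{u'}\mathbf{H}_{u'}-\frac{\lambda_u}{\Gamma_u}\mathbf{H}_u
=\mathbf{A}-\lambda_u\Big(1+\frac{1}{\Gamma_u}\Big)\mathbf{H}_u,
\end{align*}
where $\mathbf{A}=\mathbf{I}+\sum_n\mu_n\mathbf{D}_n+\sum_{u'}\lambda_{u'}\mathbf{H}_{u'}\succ0$ has full rank $NN_\text{t}$. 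Since $\mathbf{H}_u=\mathbf{h}_u^*\mathbf{h}_u^T$ has rank one, $\mathrm{rank}(\mathbf{Y}_u)\ge NN_\text{t}-1$. Complementary slackness $\mathbf{Y}_u\mathbf{W}_u=\mathbf{0}$ then forces $\mathrm{rank}(\mathbf{W}_u)\le1$. Moreover $\mathbf{W}_u\neq\mathbf{0}$, because (\ref{eq.orioptb1}) with $\sigma_u^2>0$ requires $\mathrm{tr}(\mathbf{H}_u\mathbf{W}_u)>0$. Hence $\mathrm{rank}(\mathbf{W}_u)=1$.

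For the sensing matrices, the key observation is that $\mathbf{W}_{n,q}$ enters all CRLB-related LMIs only through the scalar $t_{n,q}=\mathrm{tr}(\mathbf{B}_{n,q}\mathbf{W}_{n,q})$, because $\mathbf{F}_{\text{P},q}$, $\mathbf{F}_{\text{V},q}$ and $\mathbf{F}_{\text{PV},q}$ are affine in these scalars. The contribution of the two LMIs in (\ref{eq.orioptcd1}) to the Lagrangian therefore has the form $-\beta_{n,q}t_{n,q}$. Here $\beta_{n,q}$ is a real scalar given by the inner products of $\mathbf{\Lambda}_{\text{P},q}$ and $\mathbf{\Lambda}_{\text{V},q}$ with the (suitably permuted) block matrices built from $\mathbf{R}_{\text{P},n,q}$, $\mathbf{R}_{\text{V},n,q}$ and $\mathbf{R}_{\text{PV},n,q}$. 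Stationarity then gives
\begin{align*}
\mathbf{Y}_{n,q}=(1+\mu_n)\mathbf{I}+\sum_{u=1}^U\lambda_u\mathbf{H}_{n,u}-\beta_{n,q}\mathbf{B}_{n,q}.
\end{align*}
The first two terms form a positive definite $N_\text{t}\times N_\text{t}$ matrix and $\mathbf{B}_{n,q}$ has rank one, so $\mathrm{rank}(\mathbf{Y}_{n,q})\ge N_\text{t}-1$ whatever the sign of $\beta_{n,q}$. Complementary slackness then yields $\mathrm{rank}(\mathbf{W}_{n,q})\le1$. Since these conclusions hold for every KKT point, they hold for every optimal solution, so the relaxed optimum is attained by rank-one (or zero) matrices and the SDR is tight.

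The main obstacle I expect is the sensing step. It requires writing out the LMI multipliers carefully so that $\mathbf{W}_{n,q}$ is seen to couple to the CRLB constraints only through the scalar $\mathrm{tr}(\mathbf{B}_{n,q}\mathbf{W}_{n,q})$, i.e., that the gradient contribution is exactly a scalar multiple of $\mathbf{B}_{n,q}$. A secondary point is justifying strong duality via Slater's condition. I would state that condition explicitly as a mild assumption, for instance by noting that scaling up a feasible point strictly within the per-BS power budget gives strict feasibility.
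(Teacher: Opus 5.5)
Your proposal is correct and follows the paper's proof essentially step for step. Slater's condition gives strong duality. Stationarity then yields $\mathbf{Y}_u=\mathbf{A}-c\,\mathbf{H}_u$ and $\mathbf{Y}_{n,q}=(1+\rho_n)\mathbf{I}+\sum_u\lambda_u\mathbf{H}_{n,u}-\beta_{n,q}\mathbf{B}_{n,q}$, each a positive definite matrix minus a rank-one term, where your $\beta_{n,q}$ is exactly the paper's $\eta_{\text{P},n,q}+\eta_{\text{V},n,q}$; complementary slackness then bounds the ranks. The only differences are cosmetic: you absorb $\lambda_u\mathbf{H}_u$ into $\mathbf{A}$, and you state Slater's condition as an explicit assumption where the paper asserts it with a similarly informal perturbation argument.
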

\begin{proof}
    See Appendix~\ref{sec.app1}.
\end{proof}

\textbf{Theorem~\ref{t.1}} implies that in cooperative ISAC networks, the constraints of coherent SINR and the proposed full CRLB cannot destroy the tightness of SDR. Since (P1.3) is a convex SDP, its global optimum can be obtained using $\mathrm{CVX}$ in MATLAB with a standard \ac{IPM}. (P1.3) can result in a globally convex solution with rank-1 $\mathbf{W}_u$ and $W_{n,q}$, $\forall n,u,q$, and the corresponding BF vectors can be recovered by eigenvalue decomposition.

\textit{Complexity analysis:} In this paper, $\mathcal{O}(\cdot)$ is utilized to count the computational complexity, and the calculations with complexity related to $\mathcal{O}(1)$ are ignored. The number of variables $n_\text{var}$ in optimization is on the order of $\mathcal{O}(UN^2N_\text{t}^2+QNN_\text{t}^2)$, corresponding to $U$ communication BF matrices and $NQ$ sensing BF matrices, while the amount of variables in the auxiliary matrices is of order $\mathcal{O}(Q)$ and thus is ignored. The complexity of solving the SDP problem using the standard \ac{IPM} is $\mathcal{O}(\sqrt{\nu}\log(1/\epsilon)(n_\text{var}^3+n_\text{var}^2\sum_k\nu_k^2+n_\text{var}\sum_k\nu_k^3))$ \cite{doi}, where $\epsilon$ is the prescribed solution accuracy and is related to the iteration number. 
$\nu_k$ denotes the dimension of the $k$-th constraint, 
$\nu=\sum_{k}\nu_k\approx\mathcal{O}(UNN_\text{t}+QNN_\text{t})$, 
being dominated by the \ac{PSD} constraints (\ref{eq.orioptc2}) on $\mathbf{W}_u$ and $\mathbf{W}_{n,q}$, while the auxiliary matrices have much smaller sizes and are thus ignored. The SINR (\ref{eq.orioptb1}) and power (\ref{eq.oriopte1}) constraints are affine scalar inequalities, while the LMI constraints of CRLB have fixed small sizes, thus their terms in $\nu$ are ignored.
$\sum_k\nu_k^2\approx\mathcal{O}(UN^2N_\text{t}^2+QNN_\text{t}^2)$, $\sum_k\nu_k^3\approx\mathcal{O}(UN^3N_\text{t}^3+QNN_\text{t}^3)$. If $UN\gg Q$, the complexity of (P1.3) with standard IPM approximates $\mathcal{O}(U^{3.5}N^{6.5}N_\text{t}^{6.5}\log(1/\epsilon))$, which is the general complexity of SDP problems.

\section{Two-Stage Power Allocation Design}\label{sec.PA}
The high complexity of SDP-based BF limits its applicability to low-dynamicity or quasi-static scenarios like indoor hotspots. On the contrary, PA only optimizes a single parameter per beam, extremely reducing the computational effort and the amount of information exchanged on the fronthaul, with a loss in \ac{DoF} and performance \cite{10742291}. 
In this section, we adopt a classical communication-centric design \cite{10494224}, where \ac{RZF} and \ac{NSP} are used for communication and sensing BF, respectively. NSP guarantees that the projection of sensing beams lies in the null space of the communication channels, thereby eliminating the SI at the UEs. Thus, the problem is further simplified as a two-stage PA algorithm, in which the communication performance is prioritized, while sensing is performed opportunistically using the residual power resource.
Further, we leverage the simplified CRLB introduced in our previous work \cite{su2026} and solve its constraint by \ac{SOCP} \cite{soc} with much lower complexity than SDR-SDP with lifted covariance matrices.

The communication RZF BF is given by
\begin{align}
    \widehat{\mathbf{W}}=\mathbf{H}^H(\mathbf{H}\mathbf{H}^H+\xi\mathbf{I})^{-1},
\end{align}
where $\mathbf{H}=[\mathbf{h}_1,...,\mathbf{h}_U]^T$ denotes the communication channel matrix. The $u$-th column of $\widehat{\mathbf{W}}$ is the BF vector for UE $u$:
\begin{align}\label{eq.combf}
    &\widehat{\mathbf{w}}_u=[\widehat{\mathbf{W}}]_{:,u}=[\widehat{\mathbf{w}}_{1,u}^T,...,\widehat{\mathbf{w}}_{N,u}^T]^T,\nonumber\\
    &\Bar{\mathbf{w}}_{n,u}=\frac{\widehat{\mathbf{w}}_{n,u}}{||\widehat{\mathbf{w}}_{n,u}||},\ {\mathbf{w}}_{n,u}=\sqrt{p_{n,u}}\Bar{\mathbf{w}}_{n,u},
\end{align}
where $\Bar{\mathbf{w}}_{n,u}$ is the normalized BF vector, ${\mathbf{w}}_{n,u}$ denotes the BF vector with allocated power $p_{n,u}$. RZF introduces a trade-off between the MUI and AWGN. In this work, the regularization factor is set to $\xi=\frac{U\sigma_u^2}{NP_\text{t}}$. RZF does not take the impact of SI into account since the NSP beamformer projects the sensing beams onto the null space of the communication channels:
\begin{align}\label{eq.nspsensing}
    &\Bar{\mathbf{w}}_{n,q}=\frac{\mathbf{P}_{\perp n}\mathbf{b}_n(\theta_{n,q})}{||\mathbf{P}_{\perp n}\mathbf{b}_n(\theta_{n,q})||},\ {\mathbf{w}}_{n,q}=\sqrt{p_{n,q}}\Bar{\mathbf{w}}_{n,q},
\end{align}
where $\mathbf{P}_{\perp n}=\mathbf{I}-\mathbf{H}_n^H(\mathbf{H}_n\mathbf{H}_n^H)^{-1}\mathbf{H}_n$ denotes the NSP matrix, and $\mathbf{H}_n=[\mathbf{h}_{n,1},...,\mathbf{h}_{n,U}]^T$ represents the channel matrix from BS $n$ to all UEs. 

\textit{Remark:} In the case of $\mathbf{h}_{n,u}\approx k\mathbf{b}_{n}(\theta_{n,q}),\exists\, n,u,q$, communication beam of UE $u$ can also provide sufficient illumination to target $q$, eliminating the need for the corresponding sensing beam. This case is not further addressed in this work.

\subsection{Communication PA Design}

The communication SINR and sensing CRLB constraints are structurally decoupled, while the coupling in power constraint per BS still exists. To further reduce the complexity, we
form a two-stage PA algorithm, where priority is assigned to communication capability since it is the fundamental function of a mobile network, and the remaining power is allocated for sensing.
The communication PA problem is formulated as
\begin{subequations}\label{eq.pa0}\begin{align}
    \negthinspace(\text{P2.1-C}):\ &\min_{p_{n,u}} \sum_{n=1}^N\sum_{u=1}^U p_{n,u},\label{eq.pa01}\\
    &\ \mathrm{s.t.}\ \gamma_u\ge \Gamma_u,\ \forall u,\label{eq.pa02}\\
    &\ \quad \ \ \sum_{u=1}^Up_{n,u}\le P_\text{t},\ \forall n.\label{eq.pa03}\\
    &\ \quad \ \ p_{n,u}\ge0, \forall n,u.
\end{align}\end{subequations}

The SINR constraint for UE $u$ without SI is rewritten as
\begin{align}\label{eq.sinrex}
    &\frac{1}{\Gamma_u}{\Big|\negthinspace\sum\limits_{n=1}^N\negthinspace\sqrt{p_{n,u}}\mathbf{h}_{n,u}^T\bar{\mathbf{w}}_{n,u}\Big|^2}\negthinspace\ge\negthinspace{\sum\limits_{\substack{u'=1\\u'\ne u}}^{U}\negthinspace\Big|\negthinspace\sum\limits_{n=1}^{N}\negthinspace\sqrt{p_{n,u'}}\mathbf{h}_{n,u}^T\Bar{\mathbf{w}}_{n,u'}\negthinspace\Big|^2\negthinspace+\negthinspace\sigma_u^2}.\negthinspace
\end{align}
To obtain affine SINR constraints, we introduce auxiliary amplitude variables $a_{n,u}=\sqrt{p_{n,u}}$. 
Further, we define the equivalent channel parameters $g_{n,u}=\mathbf{h}_{n,u}^T\Bar{\mathbf{w}}_{n,u}$ and $g_{n,u,u'}=\mathbf{h}_{n,u}^T\Bar{\mathbf{w}}_{n,u'}$. 
Consequently, (\ref{eq.sinrex}) can be modified as 
\begin{align}\label{eq.sinrsoc}
    &\sum\limits_{\substack{u'=1\\u'\ne u}}^{U}\Big|\sum\limits_{n=1}^{N}{a_{n,u'}} g_{ n,u,u'}\Big|^{ 2}+\sigma_u^2\le\frac{1}{\Gamma_u}{\Big|\sum\limits_{n=1}^N{a_{n,u}}g_{ n,u}\Big|^{ 2}}\nonumber\\
    &\negthinspace\Rightarrow \negthinspace\begin{Vmatrix}\negthinspace\begin{bmatrix}
        \Big\{\sum\limits_{n=1}^{N}{a_{n,u'}}g_{n,u,u'}\Big\}_{\substack{u'=1\\u'\ne u}}^U\\\sigma_u
    \end{bmatrix}\negthinspace\end{Vmatrix}\negthinspace\negthinspace\le\negthinspace\frac{1}{\sqrt{\Gamma_u}}\Big|\negthinspace\sum_{n=1}^N\negthinspace{a_{n,u}}g_{n,u}\Big|,\negthinspace
\end{align}
where $ \Big\{\sum\limits_{n=1}^{N}{a_{n,u'}}g_{n,u,u'}\Big\}_{\substack{u'=1\\u'\ne u}}^U$ denotes a column vector collecting all the corresponding components with $u'\ne u$.
(\ref{eq.sinrsoc}) has a similar form as the constraint in \ac{SOC} $||\mathbf{A}_i\mathbf{x}+\mathbf{b}_i||\le \mathbf{c}_i^T\mathbf{x}+d_i, \forall i$; however, the right-hand term with absolute value is non-linear. To handle this nonlinearity,
we employ \ac{SCA} and introduce an iteration process.
At the beginning of each iteration $i$, we define a phase compensator based on the solution of the last iteration $\phi^{(i)}_u=\angle(\sum_{n=1}^N{a_{n,u}^{(i-1)}}g_{n,u})$, so that $e^{-j\phi_u^{(i)}}\sum_{n=1}^N{a_{n,u}}g_{n,u}$ is initially real in this iteration. 
Then during the optimization solving (P2.1-C) in this iteration, we replace $|\sum_{n=1}^N{a_{n,u}}g_{n,u}|$ by its linear lower bound $\mathcal{R}\{e^{-j\phi_u^{(i)}}\sum_{n=1}^N{a_{n,u}}g_{n,u}\}$, resulting in an SOC constraint with more conservative solution.
{The solution of one iteration is also feasible for the next, and the overall power is monotonically non-increasing but lower-bounded by 0. This guarantees the convergence.}
The phase compensation is employed to accelerate the convergence. It ensures the affine lower bound is tight at the previous solution, preventing unnecessary conservatism around the current operating point and improving the stability of the SCA iteration. 
The phase compensator is updated using the newly obtained solution, and the process is repeated until convergence.

Consequently, (P2.1-C) can be modified as an SOCP problem:
\begin{subequations}\label{eq.pac}\begin{align}
    (\text{P2.2-C}):\ &\min_{p_{n},a_{n,u}} \sum_{n=1}^N p_{n},\label{eq.pa11}\\
    &\ \mathrm{s.t.}\ (\ref{eq.sinrsoc}),\ \text{with replaced right part},\ \forall u,\label{eq.pa12}\\
    &\ \quad \ \ p_{n}\le P_\text{t},\ \forall n,\label{eq.pa13}\\
    &\ \quad \ \ \sum_{u=1}^Ua_{n,u}^2\le p_{n},\ \forall n, \label{eq.pa14}\\
    &\ \quad \ \ p_{n}\ge0, a_{n,u}\ge0,\ \forall n,u, \label{eq.pa15}
\end{align}\end{subequations}
where $p_n=\sum_{u=1}^Up_{n,u}$ denotes the power of BS $n$ allocated to communication. The inequality $\sum_{u=1}^Ua_{n,u}^2\le p_{n}$ is tight since $p_n$ only appears in the target function and power constraints. Due to the SCA procedure, the processes from phase compensation to (P2.2-C) are implemented repeatedly. The iteration terminates until the objective value $\sum_{n=1}^Np_n$ converges.

\textit{Remark:} In the simulation, the SOCP problem is solved by the function $\mathrm{coneprog}$ in MATLAB, where the quadratic power constraints in (\ref{eq.pa14}) should be reformulated as
\begin{align}
    \begin{Vmatrix}
         \begin{bmatrix}
             \{2a_{n,u}\}_{u=1}^U\\p_{n}-1
         \end{bmatrix}
     \end{Vmatrix}\le p_{n}+1,\ \forall n.
\end{align}
In addition, since $\mathrm{coneprog}$ requires real number input, in (\ref{eq.sinrsoc}), for each $u'$, $\sum_{n=1}^{N}{a_{n,u'}}g_{n,u,u'}$ 
should be replaced by $[\sum_{n=1}^{N}{a_{n,u'}}\mathcal{R}\{g_{n,u,u'}\},\sum_{n=1}^{N}{a_{n,u'}}\mathcal{I}\{g_{n,u,u'}\}]^T$ in the code.

\subsection{Sensing PA Design}

After communication PA, the remaining power of BS $n$ for sensing is $P'_{\text{t},n}=P_\text{t}-p_n$. The design goal of sensing PA is
\begin{subequations}\label{eq.s0}\begin{align}
    (\text{P2.1-S}):\ &\min_{p_{n,q}} \sum_{n=1}^N\sum_{q=1}^Qp_{n,q},\label{eq.s01}\\
    &\ \mathrm{s.t.}\ C_{\text{P},q}\le \sigma_{\text{P},q}^2,\ \forall q,\label{eq.s02}\\
    &\ \quad \ \ C_{\text{V},q}\le \sigma_{\text{V},q}^2,\ \forall q,\label{eq.s03}\\
    &\ \quad \ \ \sum_{q=1}^Qp_{n,q}\le P'_{\text{t},n},\ \forall n.\label{eq.s04}\\
    &\ \quad \ \ p_{n,q}\ge0, \forall n,q.\label{eq.s05}
\end{align}\end{subequations}

To handle the tractability of CRLB, we employ the simplified EFIMs $\mathbf{P}_q'$ and $\mathbf{V}_q'$ in (\ref{eq.simp}) since they are affine matrices of power. (\ref{eq.s02}) and (\ref{eq.s03}) for target $q$ can be equivalently expressed as 
\begin{subequations}\label{eq.a}\begin{align}
    &C_{\text{P},q}'=\frac{\mathbf{P}_q^{\prime(1,1)}+\mathbf{P}_q^{\prime(2,2)}}{\mathbf{P}_q^{\prime(1,1)}\mathbf{P}_q^{\prime(2,2)}-(\mathbf{P}_q^{\prime(1,2)})^2}\le \sigma_{\text{P},q}^2,\\
    &C_{\text{V},q}'=\frac{\mathbf{V}_q^{\prime(1,1)}+\mathbf{V}_q^{\prime(2,2)}}{\mathbf{V}_q^{\prime(1,1)}\mathbf{V}_q^{\prime(2,2)}-(\mathbf{V}_q^{\prime(1,2)})^2}\le \sigma_{\text{V},q}^2,
\end{align}\end{subequations}
where $\mathbf{P}_q^{\prime(i,j)}$ and $\mathbf{V}_q^{\prime(i,j)}$ represent the $(i,j)$-th element of $\mathbf{P}_q'$ and $\mathbf{V}_q'$:
\begin{align}
    \mathbf{P}_q^{\prime(i,j)}=\sum_{n=1}^N p_{n,q}g_{\text{P},n,q}^{ (i,j)},\ \mathbf{V}_q^{\prime(i,j)}=\sum_{n=1}^N p_{n,q}g_{\text{V},n,q}^{ (i,j)},
\end{align}
where $g_{\text{P},n,q}^{ (i,j)}=|\mathbf{b}_n^H(\theta_{n,q})\Bar{\mathbf{w}}_{n,q}|^2\mathbf{R}_{\text{P},n,q}^{\prime(i,j)}$ and $g_{\text{V},n,q}^{ (i,j)}=|\mathbf{b}_n^H(\theta_{n,q})\Bar{\mathbf{w}}_{n,q}|^2\mathbf{R}_{\text{V},n,q}^{(i,j)}$ collect the constant elements. 
(\ref{eq.a}) can be equivalently transformed into the following \ac{SOC} constraints:

\begin{subequations}\label{eq.soca}\begin{align}
    \negthinspace&\begin{Vmatrix}\negthinspace
        \begin{bmatrix}
            2\sum\limits_{n=1}^N p_{n,q}g_{\text{P},n,q}^{ (1,2)}\\\negthinspace\sum\limits_{n=1}^N\negthinspace p_{n,q}(g_{\text{P}\negthinspace,n,q}^{ (1,1)}\negthinspace-\negthinspace g_{\text{P}\negthinspace,n,q}^{ (2,2)})\negthinspace\\2/\sigma_{\text{P},q}^2
        \end{bmatrix}\negthinspace
    \end{Vmatrix}\negthinspace\negthinspace\negthinspace\le\negthinspace \sum_{n=1}^N\negthinspace p_{n,q}(g_{\text{P}\negthinspace,n,q}^{ (1,1)}\negthinspace+\negthinspace g_{\text{P}\negthinspace,n,q}^{ (2,2)})\negthinspace-\negthinspace\frac{2}{\sigma_{\text{P},q}^2},\negthinspace\\ 
    \negthinspace&\begin{Vmatrix}\negthinspace
        \begin{bmatrix}
            2\sum\limits_{n=1}^N p_{n,q}g_{\text{V},n,q}^{ (1,2)}\\\negthinspace\sum\limits_{n=1}^N\negthinspace p_{n,q}(g_{\text{V}\negthinspace,n,q}^{ (1,1)}\negthinspace-\negthinspace g_{\text{V}\negthinspace,n,q}^{ (2,2)})\negthinspace\\2/\sigma_{\text{V},q}^2
        \end{bmatrix}\negthinspace
    \end{Vmatrix}\negthinspace\negthinspace\negthinspace\le\negthinspace \sum_{n=1}^N\negthinspace p_{n,q}(g_{\text{V}\negthinspace,n,q}^{ (1,1)}\negthinspace+\negthinspace g_{\text{V}\negthinspace,n,q}^{ (2,2)})\negthinspace-\negthinspace\frac{2}{\sigma_{\text{V},q}^2}.\negthinspace\negthinspace
\end{align}\end{subequations}
Consequently, (\ref{eq.s02}) and (\ref{eq.s03}) are transformed into SOCP constraints, which can be solved by $\mathrm{coneprog}$ in MATLAB. Unlike communication PA, since the EFIMs are real matrices, no additional separation of real and imaginary parts is required. The sensing PA optimization problem is rewritten as
\begin{align}\label{eq.Aopt}
\text{(P2.2-S): } & \text{The same as (P2.1-S), with replacement}\nonumber \\ & \text{of
(\ref{eq.s02}) and (\ref{eq.s03}) by (\ref{eq.soca})}.
\end{align}

\begin{algorithm}[t]
\caption{Two-Stage Power Allocation.}
\label{alg:1}
\begin{algorithmic}[1]
\REQUIRE $ \mathbf{h}_{n,u},\mathbf{b}_n(\theta_{n,q}), \mathbf{R}_{\text{P},n,q}',\mathbf{R}_{\text{V},n,q},\forall n,u,q$.

\STATE \textbf{Initialize:} $P_{\rm c}^{(0)}=0,\ i=0$. 

\STATE Compute communication RZF and sensing NSP BF vectors $\Bar{\mathbf{w}}_{n,u}$ (\ref{eq.combf}) and $\Bar{\mathbf{w}}_{n,q}$ (\ref{eq.nspsensing}).

\STATE Compute the effective communication channels
$g_{n,u},\,g_{n,u,u'}$.

\WHILE{$i< I$}
    \STATE $i\leftarrow i+1$.
    
    \STATE Update the phase compensator $\phi_u^{(i)}$.

    \STATE Solve the communication SOCP (P2.2-C) (\ref{eq.pac}).


    \STATE Set $P_{\rm c}^{(i)}=\sum_{n=1}^N p_{n}$.

    \IF{$\frac{|P_{\rm c}^{(i)}-P_{\rm c}^{(i-1)}|}
    {P_{\rm c}^{(i-1)}}<\epsilon_{\rm c}$}
        \STATE \textbf{break}
    \ENDIF
\ENDWHILE

\STATE Set
$P'_{\text{t},n}=P_\text{t}-p_{n}$.

\STATE Solve the sensing SOCP (P2.2-S) (\ref{eq.Aopt}).

\IF{(P2.2-S) is infeasible}
    \STATE Declare a sensing outage. 
\ENDIF

\RETURN $\{p_{n,u}\}$ and $\{p_{n,q}\}$ or outage indicator.

\end{algorithmic}
\end{algorithm}

Algorithm~\ref{alg:1} describes the process of the proposed PA algorithm. The RZF- and NSP-based communication and sensing BF with normalized power is first implemented. The communication PA is implemented prior to the sensing PA and is solved by the SCA-SOCP procedure, where in the $i$-th iteration, the phase compensation with $\phi_u^{(i)}$ is implemented to guarantee that the affine lower bound is tight at the initial point. The iteration terminates when the iteration number reaches $I$ or the result converges. 
The sensing PA is performed opportunistically using the residual power resource at the end. {If (P2.2-S) is infeasible, a sensing outage is declared, and only communication beams are transmitted. However, the simulation results in Section~\ref{sec.simulation} show that the required sensing power is much lower than that of communication, hence (P2.2-S) is generally feasible.}

\textit{Complexity analysis:} The complexity of RZF is $\mathcal{O}(U^2NN_\text{t}+U^3)\approx\mathcal{O}(U^2NN_\text{t})$ in the case of $NN_\text{t}\gg U$. The complexity of NSP is $\mathcal{O}(N(QN_\text{t}^2+U^2N_\text{t}+UN_\text{t}^2+U^3))\approx\mathcal{O}(NN_\text{t}^2(U+Q))$ in the case of $N_\text{t}\gg U$. The number of variables in (P2.2-C) is $n_\text{var}=N+NU$, with $U$ $U$-dimensional \ac{SOC} constraints, $N$ power constraints, and $2N+NU$ linear constraints. Hence, the complexity of the communication PA SCA-SOCP problem with the standard IPM solver is approximately given by $\mathcal{O}(I(N^{3.5}U^{3.5}+N^{2.5}U^{5.5})\log(1/\epsilon))$. 
By the same token, (P2.2-S) has complexity of $\mathcal{O}(\sqrt{(N+Q)}N^3Q^3\log(1/\epsilon))$. Among these processes, the communication PA is expected to dominate the complexity. The PA algorithm does not involve per-antenna element optimization or covariance matrix lifting, hence avoiding the corresponding high complexity in SDP-BF.

\textit{Remark:} Although the proposed SDP-BF and two-stage PA schemes are based on the CRLB of the MXS mode, 
they can also cooperate with the MMS and MBS modes by setting the FIM contributions of the inactive sensing links to zero.

\section{Simulation Results}\label{sec.simulation}

\begin{figure}
    \centering
    \includegraphics[width=0.75\linewidth]{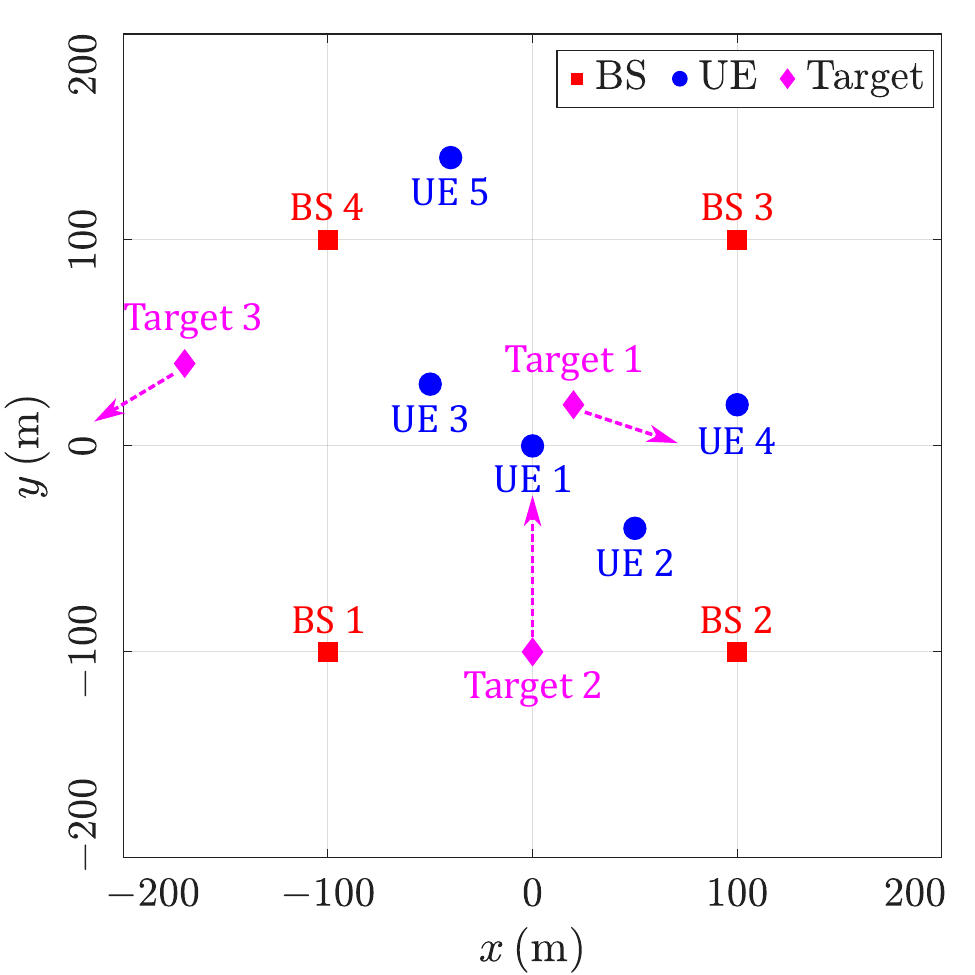}
    \caption{Distribution of BSs, UEs, and targets. The square formed by the four BSs is designated as the nominal service region. The target moving direction is represented by arrows.}
    \label{fig:map}
    \vspace{-1em}
\end{figure}

\begin{table}
    \centering
    \caption{System parameters.}
    \begin{tabular}{c|c}
        \hline
        Parameter & Value \\ \hline
        Number of BSs $N$ & 4 \\
        Number of UEs $U$ & 5 \\
        Number of targets $Q$ & 3 \\
        Carrier frequency $f_\text{c}$ & 7.5\,GHz \\
        Bandwidth $B$ & 100\,MHz \\
        Frame length $T_\text{F}$ & 10\,ms \\
        ULA size & $N_\text{t}=N_\text{r}=16$ \\
        Noise figure & UE: 9\,dB, Rx: 5\,dB \\
        Average RCS $\Bar{\varsigma}$ & 0\,dBsm \\
        SE constraint & 3\,bit/s/Hz \\
        PEB and VEB constraints $\sigma_\text{P}$ and $\sigma_\text{V}$ & 0.01\,m and 0.01\,m/s \\
        Power constraint per Tx $P_\text{t}$ & 35\,dBm \\
        \hline
    \end{tabular}
    \label{tab:para}
    \vspace{-1em}
\end{table}

In this section, we evaluate the behavior of the proposed SDP-BF and two-stage PA designs across different configurations and analyze their performance and time costs. We consider an \ac{UMi} scenario with 4 BSs, 5 UEs, and 3 targets. 
Without loss of generality, the sensing nodes are arranged at the vertices of a square (as shown in Fig.~\ref{fig:map}), their coordinates (in meters) are $\mathbf{p}_1=[-100,-100]^T,\mathbf{p}_2=[100,-100]^T,\mathbf{p}_3=[100,100]^T,$ and $\mathbf{p}_4=[-100,100]^T$, their ULAs are placed so that the normal direction points toward the origin $[0,0]^T$. The positions of UEs are set to $[0,0]$, $[50,-40]$, $[-50,30]$, $[100,20]$, and $[-40,140]$, and those of the targets are set to $[20,20]$, $[0,-100]$, and $[-170,40]$, with velocity (in km/h) of $[30,-10]$, $[0,50]$, and $[-25,-15]$. To illustrate the capability of the system more comprehensively, UE 4 and target 2 are located at the edge of the nominal service region, while UE 5 and target 3 are outside. {In \cite{su2026}, we demonstrated that the most significant numerical discrepancy between the full and simplified CRLBs occurs at the edges. Its impact can be observed by the tightness of the PEB and VEB constraints for target 2.}

Unless otherwise specified, the system configuration follows Table~\ref{tab:para}. The system operates in the FR3 band at 7.5\,GHz with a bandwidth of 100\,MHz and a frame length of 10\,ms, while the number of samples is calculated by $N_s=BT_\text{F}$ for multicarrier signals like OFDM and OTFS \cite{11570946}. 
The constraints of \acp{SE} are set to 3\,bit/s/Hz, corresponding to $\Gamma_u=\Gamma=7$\,dB, $\forall u$, while for sensing, we consider a stringent constraint with PEB and VEB below 0.01\,m and 0.01\,m/s to meet the requirement of high-accuracy sensing \cite{9815783,iturm2160}. Since the CRLB represents an optimistic lower bound that may not always be attained by practical estimators, such stringent thresholds also provide a conservative margin for practical sensing performance.
The \ac{NF}, power constraint per BS, and channel fading parameters are set according to 3GPP TR 38.901 \cite{3gpp38901}. 
We consider Rayleigh fading channels, where the channel vector between BS $n$ and UE $u$ follows $\mathbf{h}_{n,u}^T\sim\mathcal{CN}(0,{\beta_{n,u}}\mathbf{I}_{N_\text{t}})$, where $\beta_{n,u}=10^{-\text{PL}_{n,u}/10}$, and the path loss
\begin{align}
    \text{PL}_{n,u}=30.22+35.3\log_{10}(r_{n,u})+21.3\log_{10}(f\negthinspace_\text{c}),
\end{align}
in which $r_{n,u}$ is the distance between BS $n$ and UE $u$. For radar RCS, we consider the Swerling-I model \cite{Swerling}, where the RCS remains constant within one coherent processing interval (a frame) and varies independently between intervals. Its probability density function is given by $p(\varsigma_{n,m,q})=-\frac{1}{\Bar{\varsigma}}\exp(\frac{\varsigma_{n,m,q}}{\Bar{\varsigma}})$, where $\Bar{\varsigma}=\mathbb{E}\{\varsigma_{n,m,q}\}=1\,\text{m}^2\,\hat{=}\,0\,\text{dBsm}, \forall n,m,q$, is configured in the simulation. The \acp{NF} of Rxs and UEs are 5\,dB and 9\,dB, respectively. In the simulation, the BF (P1.3) and PA (P2.2-C \& P2.2-S) problems are solved by $\mathrm{CVX}$ and $\mathrm{coneprog}$ in MATLAB, respectively. 
The results in the simulation are averaged over 100 Monte Carlo realizations.

\subsection{Performance Analysis}

\begin{figure*}
    \centering
    \subfloat[Power vs. SINR constraint.]{
        \includegraphics[width=0.31\linewidth]{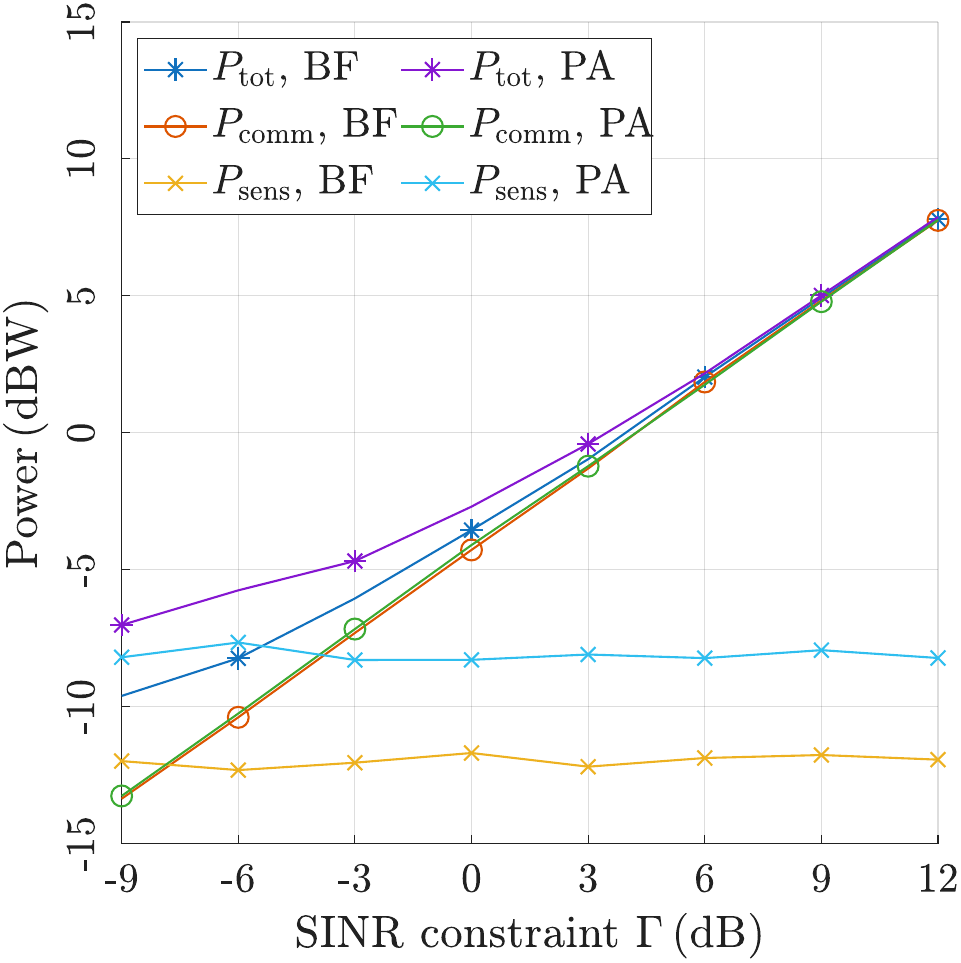}\label{fig.power1}
    }\hspace{-1mm}
    \subfloat[Power vs. PEB constraint.]{
        \includegraphics[width=0.31\linewidth]{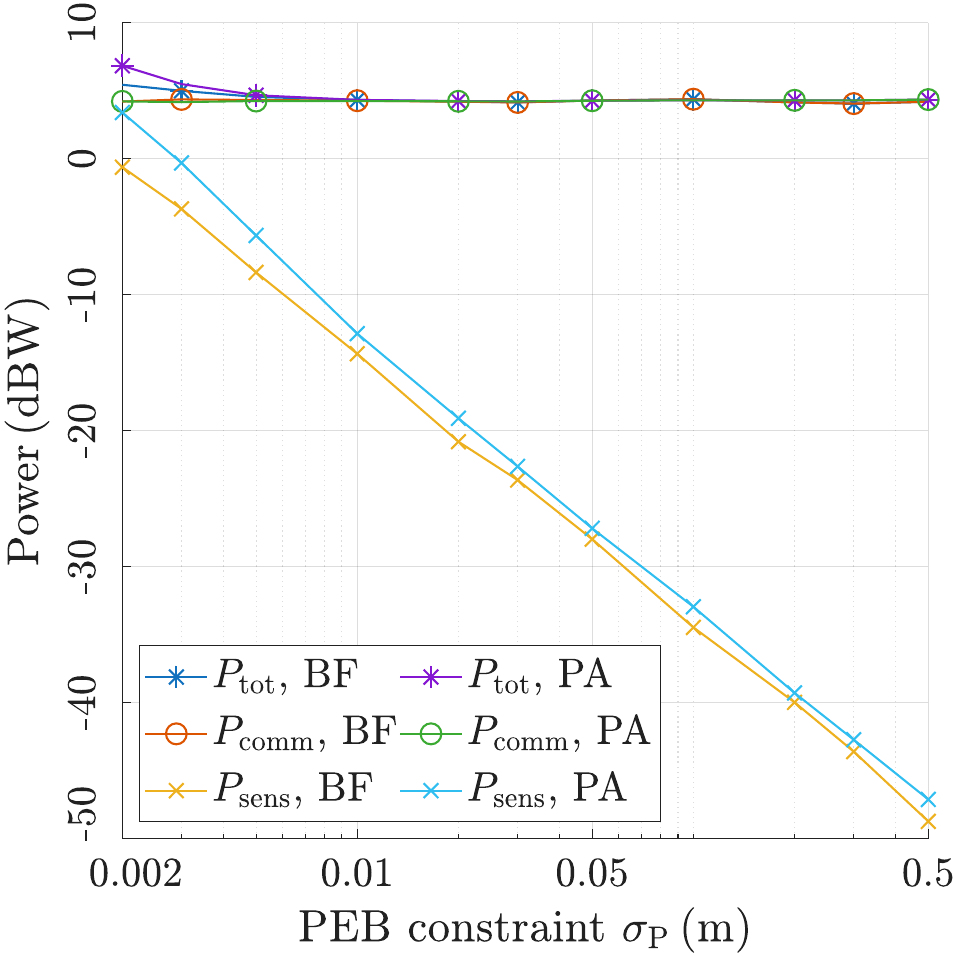}\label{fig.pow_peb1}
    }\hspace{-1mm}
    \subfloat[Power vs. VEB constraint.]{
        \includegraphics[width=0.31\linewidth]{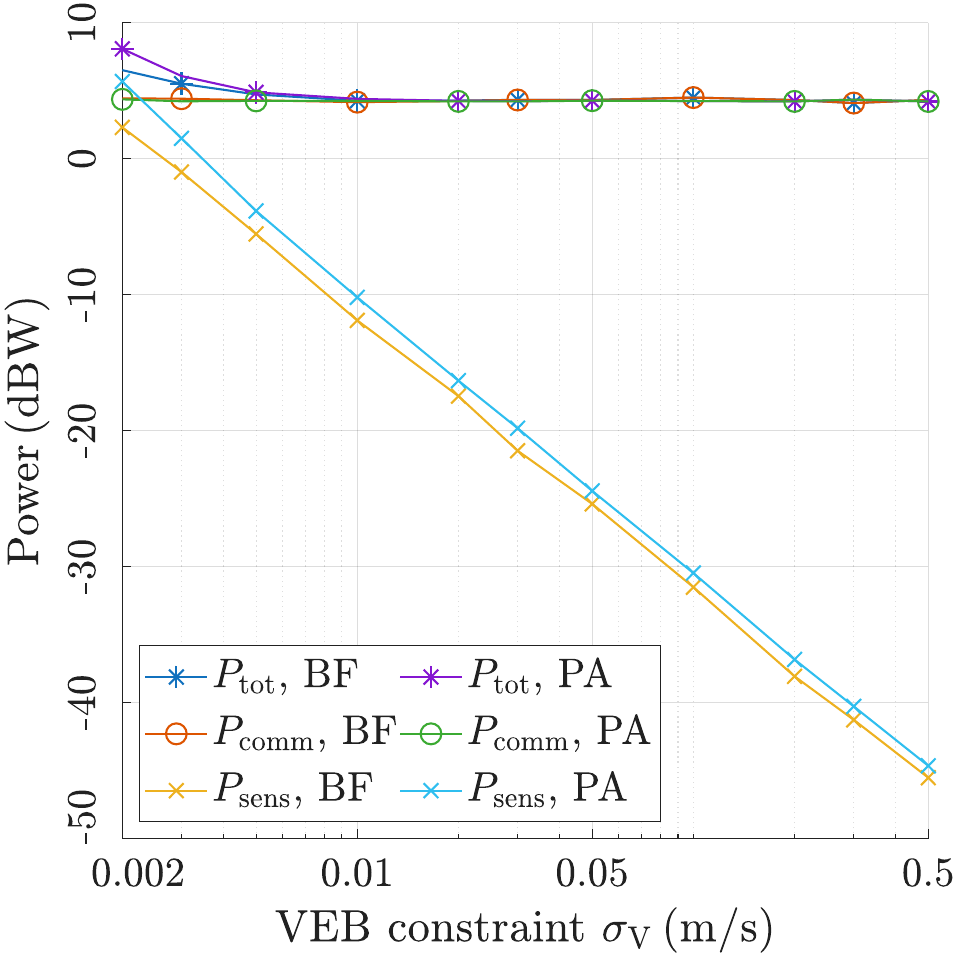}\label{fig.pow_veb1}
    }
    
    \subfloat[Measured average SINR vs. its constraint.]{
        \includegraphics[width=0.31\linewidth]{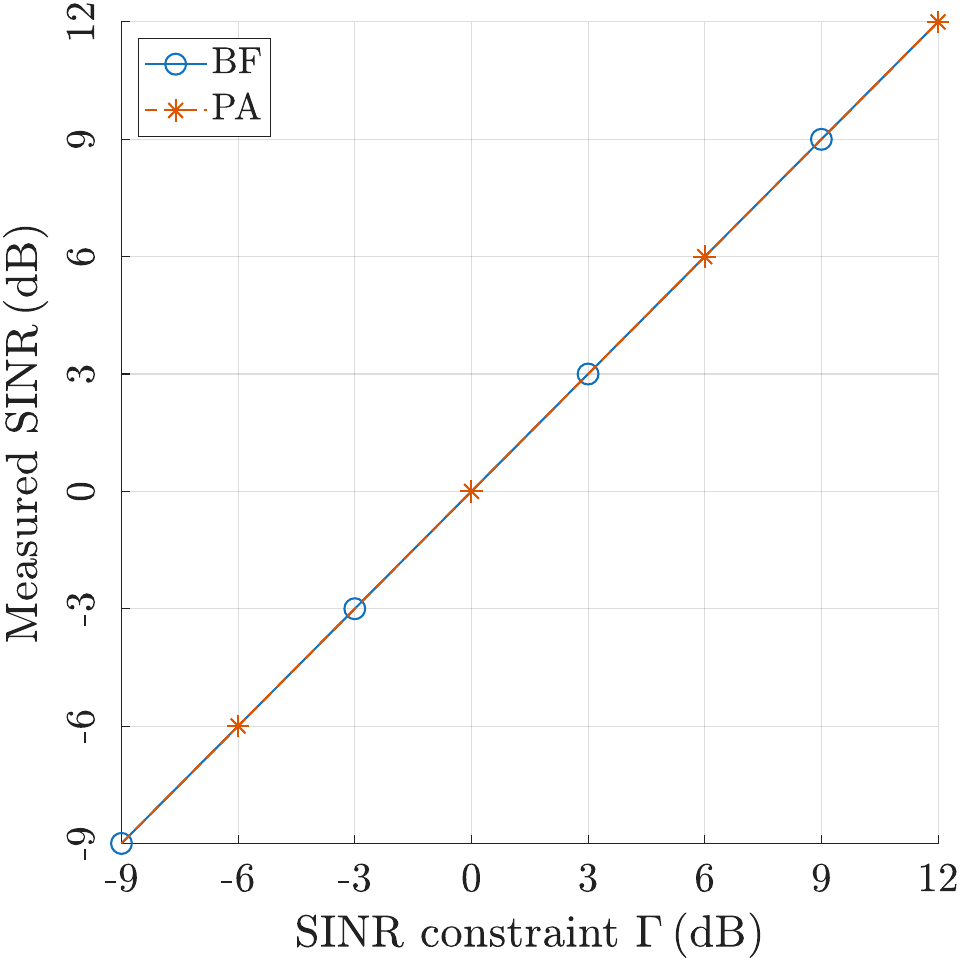}\label{fig.sinr}
    }\hspace{-1mm}
    \subfloat[Measured PEB vs. its constraint.]{
        \includegraphics[width=0.31\linewidth]{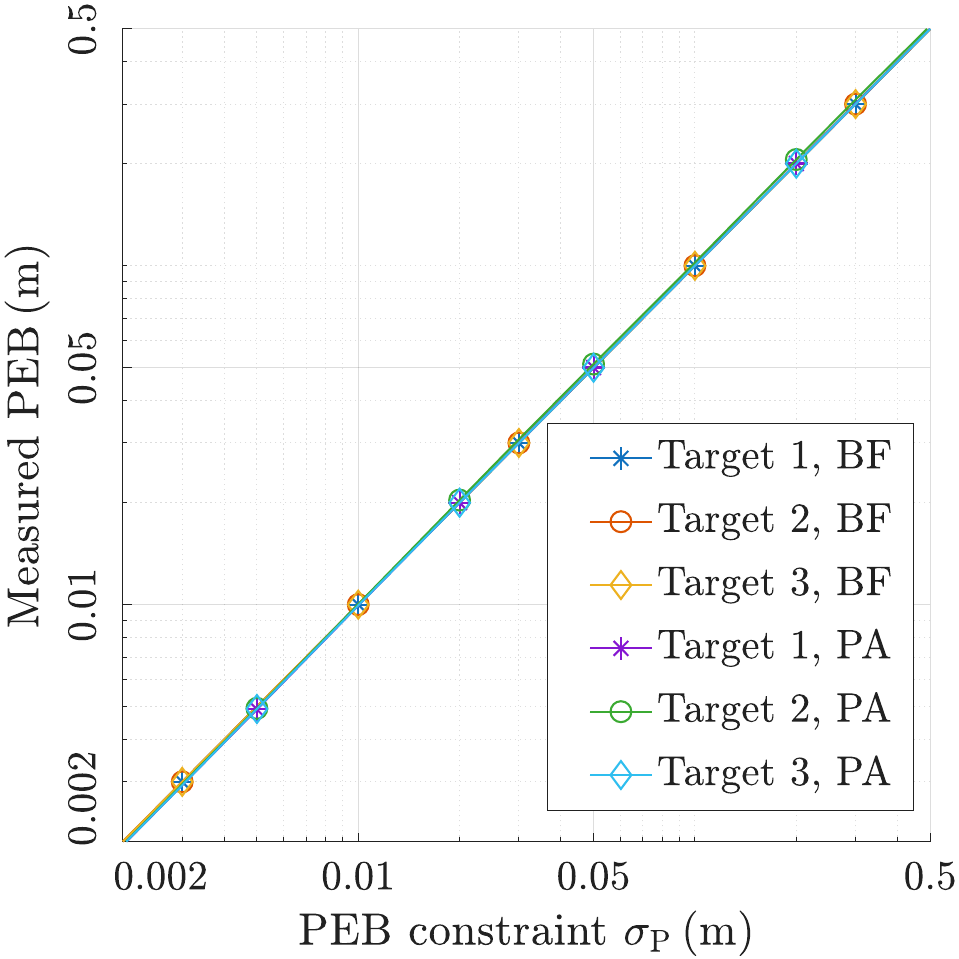}\label{fig.peb}
    }\hspace{-1mm}
    \subfloat[Measured VEB vs. its constraint.]{
        \includegraphics[width=0.31\linewidth]{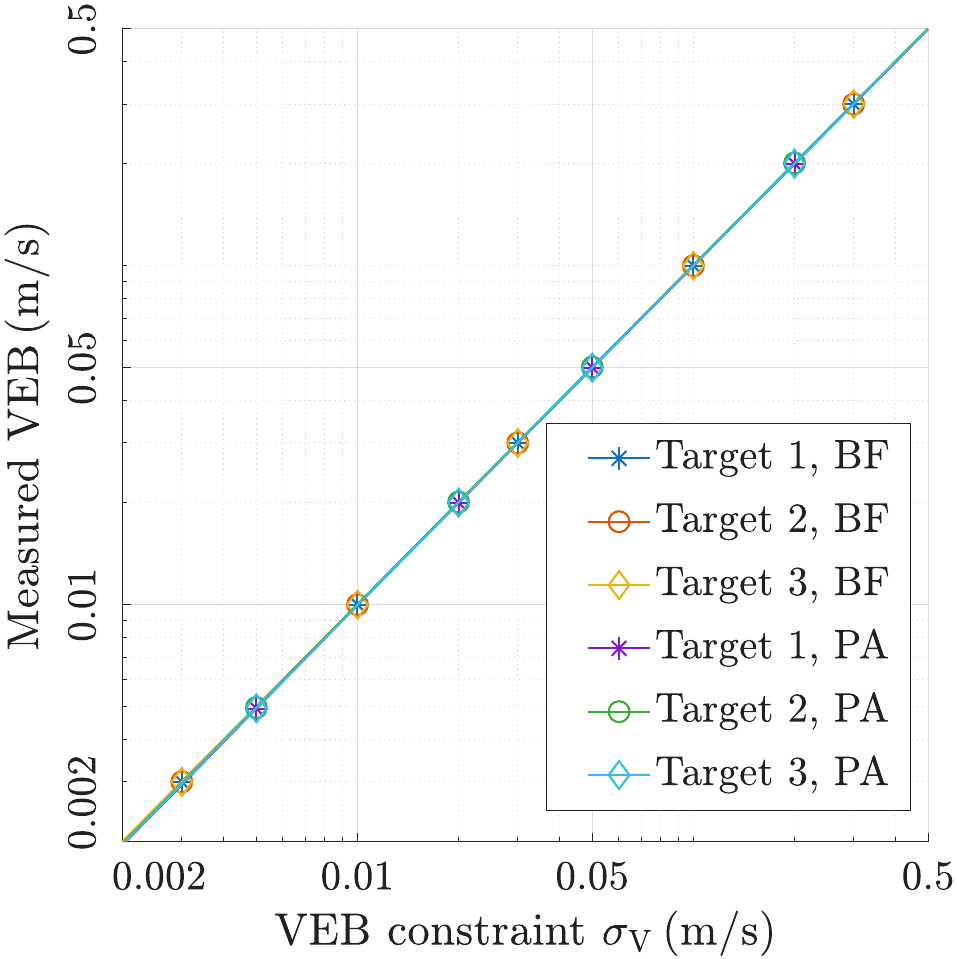}\label{fig.veb}
    }
    \caption{Impact of SINR and CRLB requirements on consumed power and performance of the proposed BF and PA schemes. $P_\text{tot}$, $P_\text{comm}$, and $P_\text{sens}$ denote the total power consumption and power for communication and sensing, respectively.}
    \label{fig.performance}
\end{figure*}

This section investigates the performance and behavior of the proposed SDP-BF and two-stage PA schemes with different settings. Fig.~\ref{fig.performance} illustrates the impact of SINR, PEB, and VEB constraints on power consumption and the satisfaction to the constraints. Since both $\mathbf{P}_q$ and $\mathbf{V}_q$ have a linear relationship with sensing beamforming gain, during the simulation of PEB and VEB constraints, the constraint on the other is relaxed. In addition, since the measured SINRs of UEs are identical and tightly close to the constraint in the simulation, only one representative SINR curve is shown for each of the two-stage PA and SDP-BF schemes in Fig.~\ref{fig.performance}(d) for visual clarity.

It is obvious that the communication streams are dominant in power consumption, while the allocated power for sensing is much smaller, being around $-8.3$\,dBW for two-stage PA and $-12.2$\,dBW for SDP-BF, although the requirement on sensing accuracy is extremely strict. 
Only when the required SINR is below $-3$\,dB (SE of 0.59\,bit/s/Hz) or the PEB and VEB constraints are stricter than 0.002\,m and 0.002\,m/s, the sensing power yields dominance. 

The easily satisfied CRLB constraint is caused by the large number of samples $N_s=BT_\text{F}=10^6$, providing up to 60 dB of radar processing gain.

Moreover, the constraints of SINR, PEB, and VEB are tightly satisfied by the SDP-BF and two-stage PA schemes, as illustrated in Fig.~\ref{fig.performance}(d)-(f). Although the communication PA procedure adopts an SCA process with a conservative treatment in each iteration, the SINR constraint is tightly satisfied with nearly identical communication power consumption to that of SDP-BF. Further, for targets 1, 2, and 3 located inside, at the edge of, and outside the nominal service area, the constraints on the PEB and VEB are tightly satisfied; this indicates that the numerical discrepancy between the simplified and full CRLBs has a negligible impact on sensing performance. 
However, the required sensing power of the PA scheme is around $1\sim 4$\,dB higher than SDP-BF, since the two-stage PA algorithm projects the sensing beam to the null space of the communication channels, resulting in sub-optimal sensing beamforming. Nevertheless, since the required sensing power is much less than communication, the difference between the proposed BF and PA schemes in total power consumption is tiny. 
A more detailed power distribution is provided in Fig.~\ref{fig.cc}. The SDP-BF and two-stage PA schemes yield nearly identical communication power distributions across the BSs, whereas the PA scheme allocates more sensing power at each BS. 

\begin{figure}
    \centering
    \subfloat[SDP-BF.]{
        \includegraphics[width=0.47\linewidth]{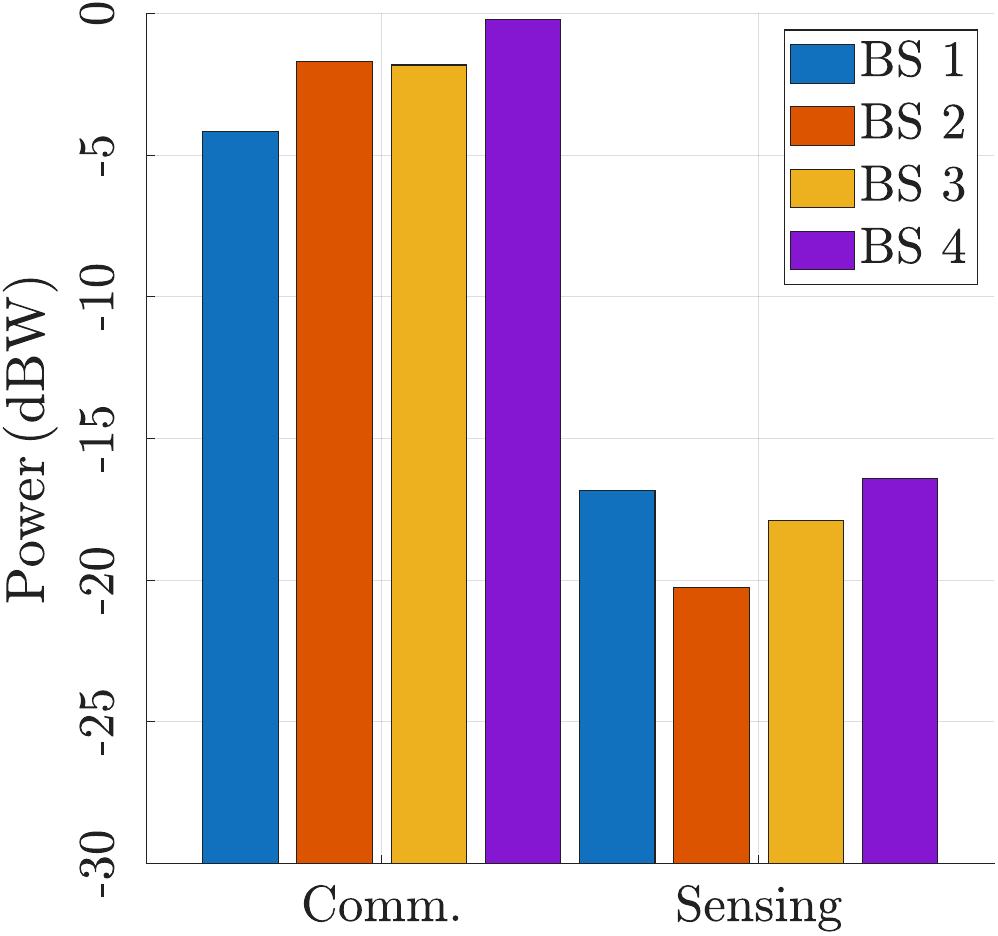}\label{fig.c1}
    }\hspace{-1mm}
    \subfloat[Two-stage PA.]{
        \includegraphics[width=0.47\linewidth]{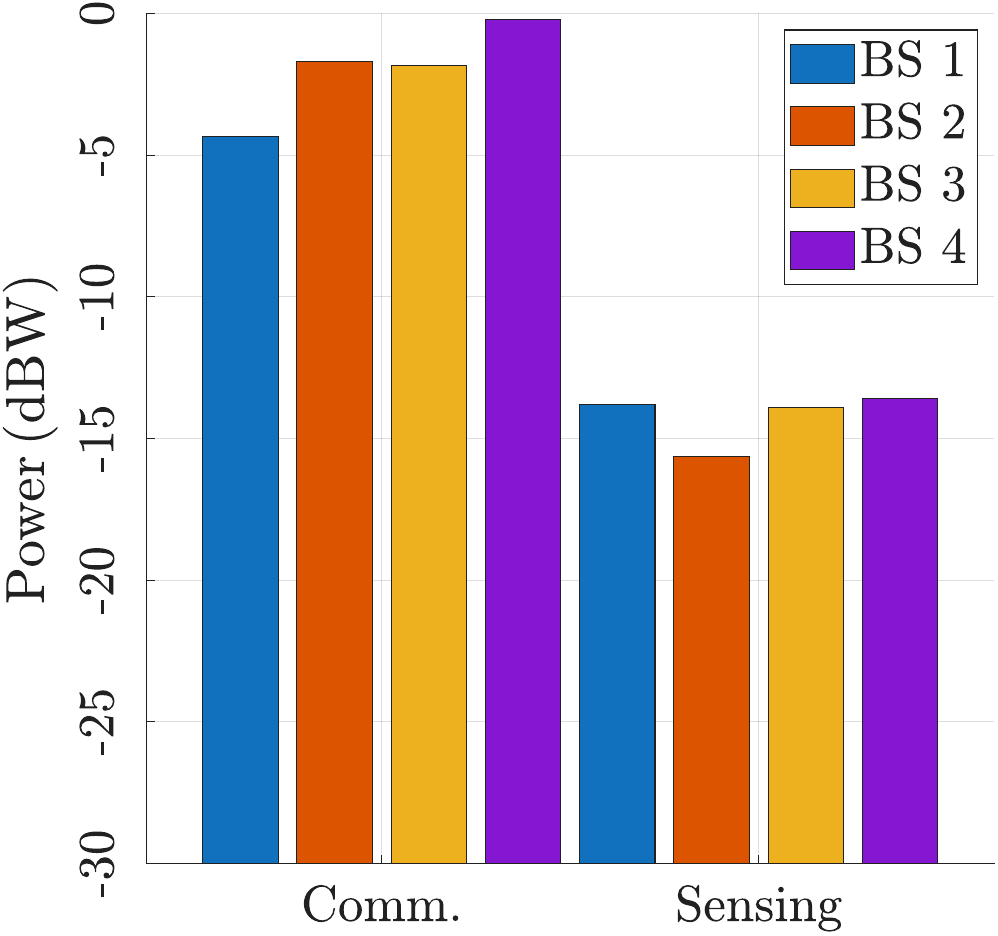}\label{fig.c2}
    }\hspace{-1mm}
    \caption{Power allocation among BSs and functionalities.}
    \label{fig.cc}
\end{figure}

The considered CRLB constraints characterize the scenario of post-detection parameter estimation accuracy, where the target is assumed to have been detected in the detection phase, and the BF or PA scheme is designed to address the position and velocity estimation accuracy. 
Figs.~\ref{fig.performance} and \ref{fig.cc} demonstrate that, 
\textit{under the considered scenario and from the perspective of parameter estimation and CRLB, the sensing performance is easy to satisfy, whereas guaranteeing the communication \ac{QoS} dominates the transmit power consumption. In this case, the integration of sensing may not lead to significant influence on communication performance or power consumption. This demonstrates the feasibility of deploying ISAC in 6G wireless networks.}
However, target detection \cite{10304081,10380513}, which is generally characterized by the detection probability and false alarm rate, may impose more stringent SNR requirements than target estimation and needs further investigation.

\subsection{Time Cost}

\begin{figure}
    \centering
    \subfloat[Time cost vs. $N_\text{t}$.]{
        \includegraphics[width=0.47\linewidth]{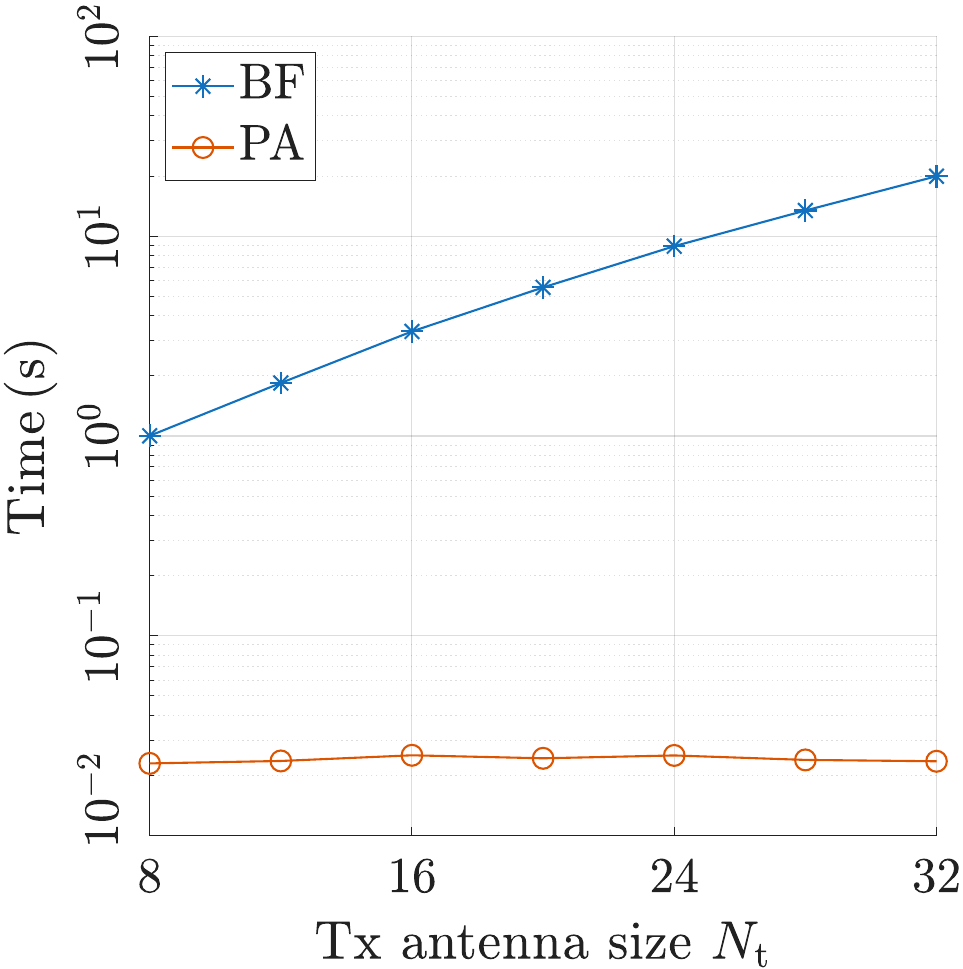}\label{fig.time_nt}
    }\hspace{-1mm}
    \subfloat[Time cost vs. $U,Q$.]{
        \includegraphics[width=0.47\linewidth]{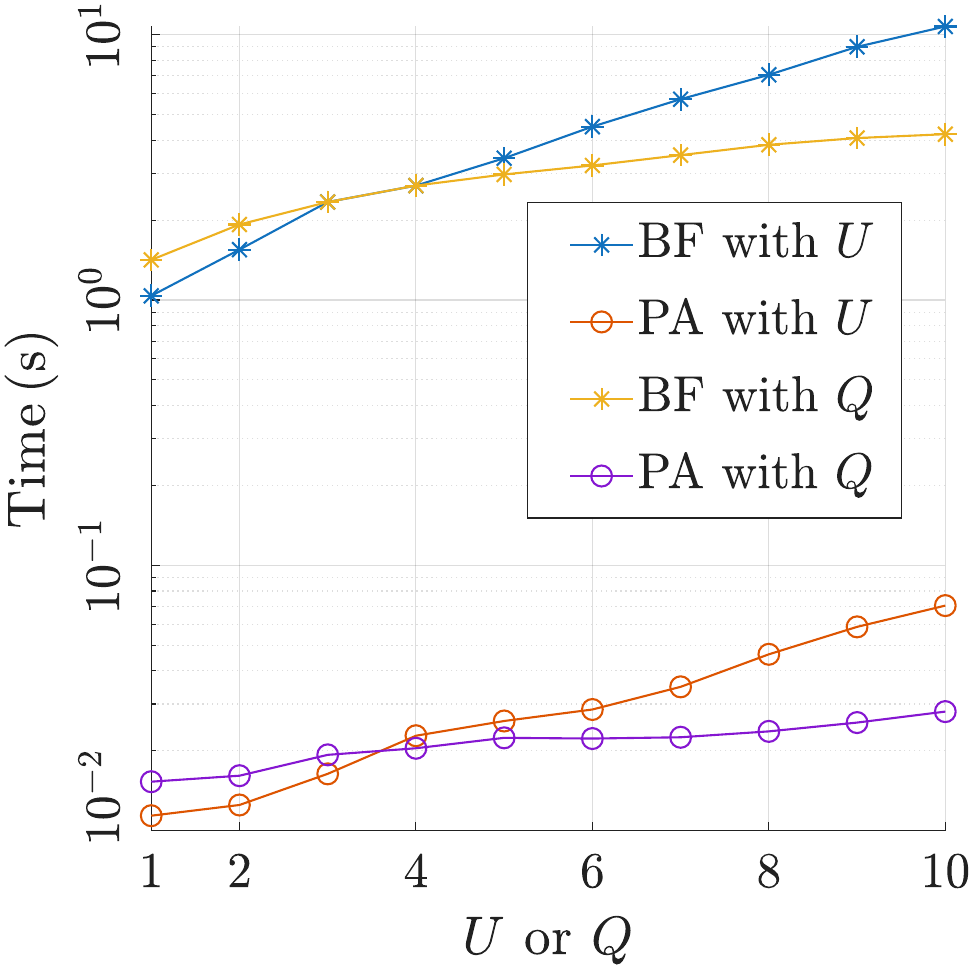}\label{fig.time_uq}
    }\hspace{-1mm}
    
    \subfloat[Time cost vs. $\Gamma$.]{
        \includegraphics[width=0.47\linewidth]{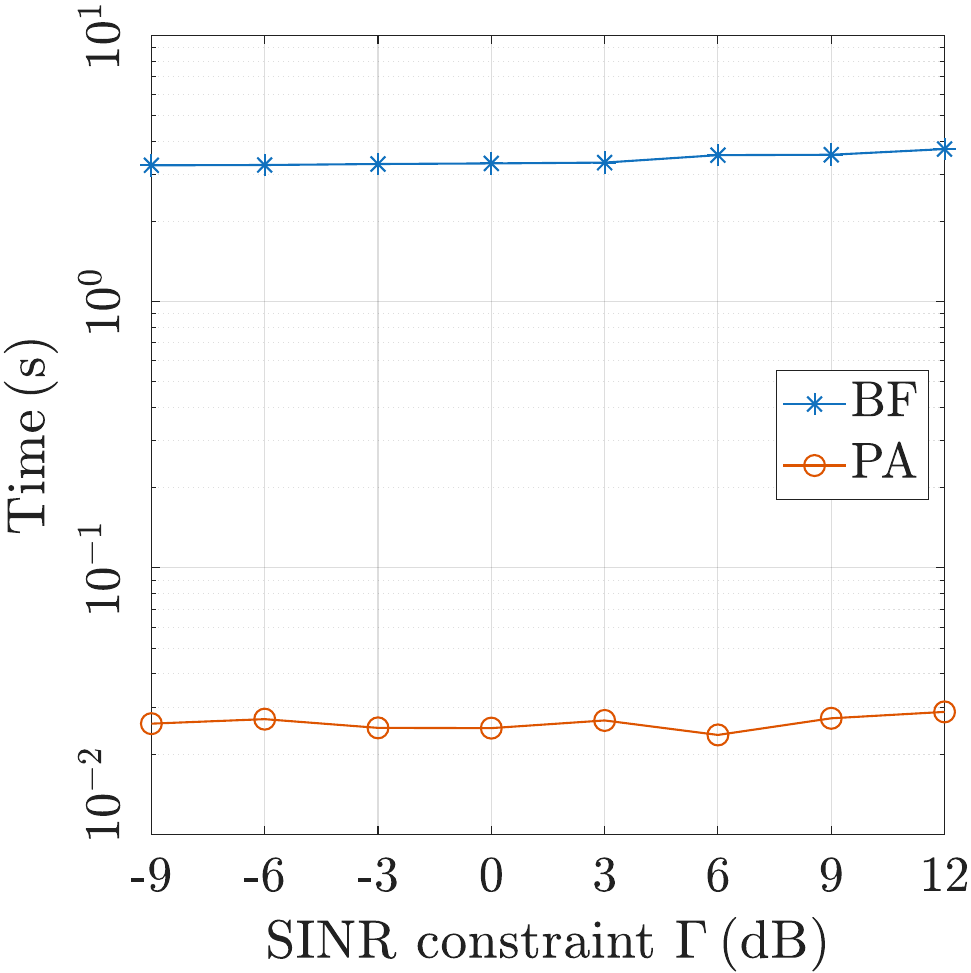}\label{fig.time_sinr}
    }\hspace{-1mm}
    \subfloat[Time cost vs. $\sigma_\text{P}\ \&\ \sigma_\text{V}$.]{
        \includegraphics[width=0.47\linewidth]{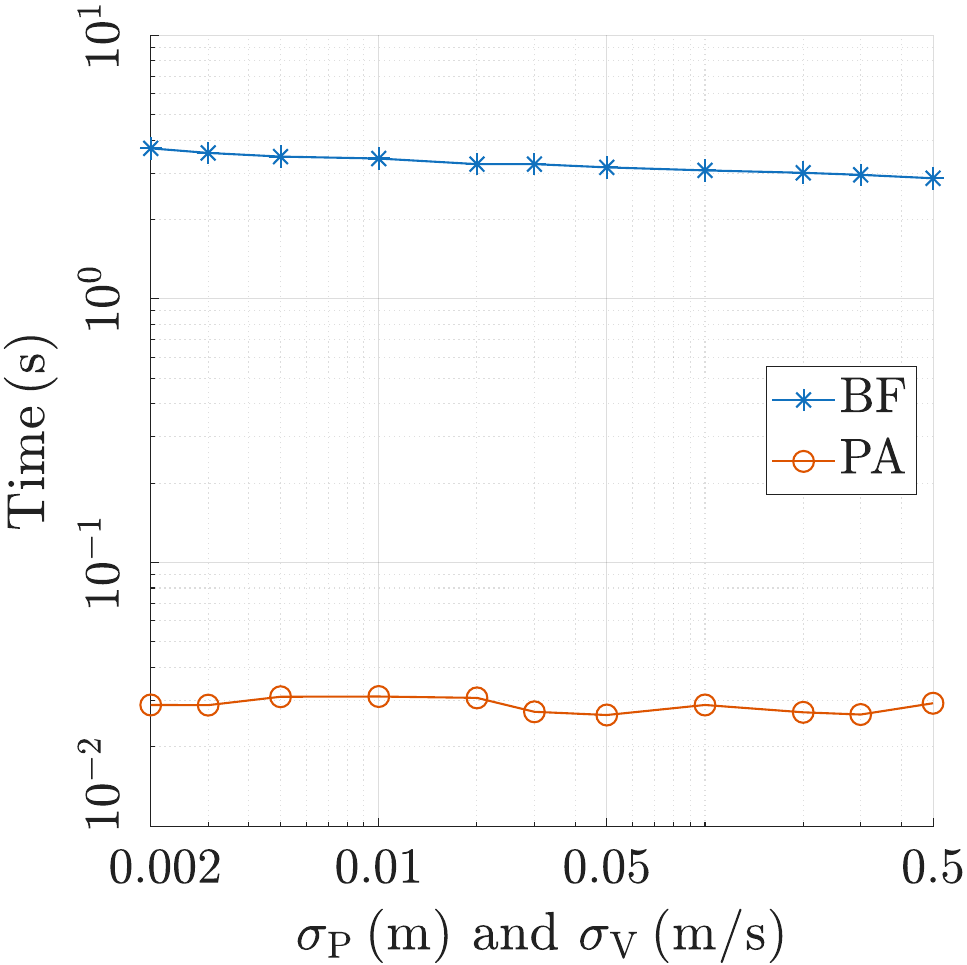}\label{fig.time_crlb}
    }
    \caption{Average computation time of the proposed SDP-BF and two-stage PA. During the tests of $U$ and $Q$, the other parameter is fixed to 3 for fairness, and the positions of the UEs and targets are generated randomly in the area depicted in Fig.~\ref{fig:map}.}
    \label{fig.time}
\end{figure}

Although the two-stage PA algorithm has a slight degradation in performance compared to the SDP-BF scheme, the complexity is significantly reduced. This section investigates their time efficiency. The time cost is recorded in MATLAB 2025b on an Intel Core i7-1165G7 processor.
The computational costs of the designed BF and PA schemes are measured by the time required to solve (P1.3) and to execute Algorithm~\ref{alg:1}, respectively. (P1.3) is solved using $\mathrm{CVX}$ with the $\mathrm{MOSEK}$ solver, while the PA problems (P2.2-C) and (P2.2-S) are solved by $\mathrm{coneprog}$. The recorded average time is illustrated in Fig.~\ref{fig.time}. The difference in time cost between the SDP-BF and two-stage PA schemes is significant. For the parameters given in Table~\ref{tab:para}, the computation time of SDP-BF is around 3.4\,s. In contrast, the time cost of the PA algorithm is around 25\,ms, which satisfies the requirements of high refreshing rate and low latency proposed in 3GPP TS 22.137 \cite{3gpp22137}, indicating its applicability in dynamic scenarios like traffic monitoring and tracking.

Further, since the PA process does not involve per-antenna element optimization and the classical RZF and NSP beamforming schemes are low-time-cost, the time cost of the PA algorithm does not change with antenna size. The number of UEs and targets can impact the computation time of both the SDP-BF and two-stage PA schemes, whereas the number of UEs has more significant influence than the number of targets, since $U$ is more dominant than $Q$ in complexity, as analyzed at the end of Sections~\ref{sec.BF} and \ref{sec.PA}. Additionally, the results show that the varied SINR and CRLB constraints cannot lead to obvious changes in the computation time.

\section{Conclusion}\label{sec.conclusion}
{To directly characterize and optimize sensing accuracy rather than surrogate SINR in optimization,
this paper provides a CRLB-based optimization framework for BF and PA in cooperative MIMO ISAC networks. 
In our design, we investigate how the proposed CRLB can be exploited and incorporated into system optimization.}
Specifically, we consider a power-minimization design, 
subject to the constraints of SINR, {PEB}, {VEB}, and per-{BS} power limitation. 
We first propose an {SDP}-based BF design, where the beamforming vectors are lifted as the covariance matrices, and the convexity is obtained by {SDR}. 
Particularly, the full PEB and VEB constraints are transformed into LMIs via double Schur complements. 
Addressing the complexity and timeliness, we further developed a two-stage PA algorithm, where the classical {RZF} and {NSP} methods are employed for communication and sensing beamforming, respectively, and simplified PEB and VEB are adopted.
The communication and sensing PA designs are then implemented sequentially and handled by SOCP.
The simulation result shows a negligible performance gap between the proposed BF and PA schemes, while the execution time of the PA algorithm is around 25\,ms, demonstrating its capability of real-time updating in dynamic scenarios. 
Furthermore, the results demonstrate that, for tasks of target estimation and from the aspect of CRLB, the integration of sensing in communication networks may not significantly influence the communication capability or power consumption. This conclusion is expected to accelerate the deployment of ISAC in wireless networks.


\begin{appendices}
\section{Proof of Theorem~\ref{t.1}}\label{sec.app1}
To demonstrate the tightness of the SDR, we analyze the \ac{KKT} conditions of the relaxed SDP (P1.3). 
Problem (P1.3) is convex and satisfies Slater’s condition \cite{Boyd}, following the standard treatment in related beamforming studies \cite{10742291,5755208}: since feasible solutions are generally with power margins due to the target function, the variable matrices can be perturbed toward the interior of the feasible set, yielding strict feasibility of other constraints.
Therefore, strong duality holds, and the KKT conditions \cite{Jarre2004} are necessary and sufficient for optimality.
In the following, we first show the tightness of the SDR for $\mathbf{W}_{u}$ with JT-type communication, and then address the sensing BF matrices $\mathbf{W}_{n,q}$.

\subsection{Communication Beamforming Matrix}
Let $\lambda_u\ge0$ denote the dual variable associated with the SINR constraint of UE $u$, $\rho_n\ge0$ denote that with the power constraint of BS $n$, and $\mathbf{Y}_{u}\succeq 0$ represent the dual matrix of the PSD constraint $-\mathbf{W}_{u}\preceq0$. The part of Lagrangian function of  (P1.3) related to $\mathbf{W}_{u}$ can be written as
\begin{align}
    \mathcal{L}(\mathbf{W}_{u})= &\mathrm{tr}(\mathbf{W}_{u})+\sum_{\substack{u'=1\\u'\ne u}}^U\lambda_{u'}\mathrm{tr}(\mathbf{H}_{u'}\mathbf{W}_{u})-\frac{\lambda_u}{\Gamma_u}\mathrm{tr}(\mathbf{H}_u\mathbf{W}_{u})\nonumber\\
    &+\sum_{n=1}^N\rho_n\text{tr}(\mathbf{D}_n\mathbf{W}_{u})-\text{tr}(\mathbf{Y}_{u}\mathbf{W}_{u}).
\end{align}

The stationary condition is obtained by taking $\frac{\partial\mathcal{L}}{\partial\mathbf{W}_{u}}=\mathbf{0}$, yielding
\begin{align}
    \negthinspace\mathbf{Y}_{u}^\text{opt}\negthinspace=\negthinspace\mathbf{I}\negthinspace+\negthinspace\sum_{\substack{u'=1\\u'\ne u}}^U\negthinspace\lambda_{u'}\mathbf{H}_{u'}\negthinspace-\negthinspace\frac{\lambda_u}{\Gamma_u}\mathbf{H}_u\negthinspace+\negthinspace\sum_{n=1}^N\rho_n\mathbf{D}_n\negthinspace=\negthinspace\mathbf{A}_u\negthinspace-\negthinspace\frac{\lambda_u}{\Gamma_u}\mathbf{H}_u,\negthinspace
\end{align}
where $\mathbf{Y}_{u}^\text{opt}$ denotes the optimal dual matrix corresponding to the PSD constraint of the optimal $\mathbf{W}_u^\text{opt}$.
$\mathbf{A}_u\succ 0$, $\mathrm{rank}(\mathbf{A}_u)=NN_\text{t}$, while $\mathrm{rank}(\mathbf{H}_u)=1$ since $\mathbf{H}_u=\mathbf{h}_u^*\mathbf{h}_u^T$. Thus, we have $\mathrm{rank}(\mathbf{Y}_u^\text{opt})\ge NN_\text{t}-1$ and $\dim(\mathcal{N}\{\mathbf{Y}_{u}^\text{opt}\})\le1$.

From the complementary slackness condition, we have $\text{tr}(\mathbf{Y}_{u}^\text{opt}\mathbf{W}_{u}^\text{opt})=0$. Considering the \ac{PSD} property, we have $\mathbf{Y}_{u}^\text{opt}\mathbf{W}_{u}^\text{opt}=\mathbf{0}$, implying that the columns of $\mathbf{W}_{u}^\text{opt}$ lie in the null space of $\mathbf{Y}_{u}^\text{opt}$, i.e., $\mathrm{Col}\{\mathbf{W}_{u}^\text{opt}\}\subseteq \mathcal{N}\{\mathbf{Y}_{u}^\text{opt}\}$, $\mathrm{rank}(\mathbf{W}_{u}^\text{opt})\le\dim(\mathcal{N}\{\mathbf{Y}_{u}^\text{opt}\})\le1$. Considering the required SINR $\Gamma_u>0$, $\mathbf{W}_{u}^\text{opt}\ne0$ should be guaranteed, hence $\mathrm{rank}(\mathbf{W}_{u}^\text{opt})=1$.

\subsection{Sensing Beamforming Matrix}

Let $\lambda_u\ge0$ and $\rho_{n}\ge0$ denote the dual variables associated with the SINR constraint of UE $u$ and power constraint of BS $n$, and 
$\mathbf{Y}_{n,q}\succeq0$ represent the dual matrix of the PSD constraint $-\mathbf{W}_{n,q}\preceq0$. Further introduce the dual matrices $\mathbf{Y}_{\text{P},q}\succeq0,\mathbf{Y}_{\text{V},q}\succeq0$ for the matrices $\mathbf{S}_{\text{P},q},\mathbf{S}_{\text{V},q}$ in the Schur complement (\ref{eq.orioptcd1}):
\begin{align}
    \negthinspace&{\setlength{\arraycolsep}{2pt} \mathbf{S}_{\text{P},q}=\begin{bmatrix}
        \mathbf{F}_{\text{P},q}-\mathbf{X}_{\text{P},q} & \mathbf{F}_{\text{PV},q}\\ \mathbf{F}_{\text{PV},q}^T & \mathbf{F}_{\text{V},q}
    \end{bmatrix},\ \mathbf{Y}_{\text{P},q}=\begin{bmatrix}
        \mathbf{Y}_{\text{P},q}^{11} & \mathbf{Y}_{\text{P},q}^{12}\\ (\mathbf{Y}_{\text{P},q}^{12})^T & \mathbf{Y}_{\text{P},q}^{22}
    \end{bmatrix},}\nonumber\\
    &{\setlength{\arraycolsep}{2pt} 
    \negthinspace\mathbf{S}_{\text{V},q}\negthinspace=\negthinspace\begin{bmatrix}
        \mathbf{F}_{\text{V},q}-\mathbf{X}_{\text{V},q} & \mathbf{F}_{\text{PV},q}^T\\ \mathbf{F}_{\text{PV},q} & \mathbf{F}_{\text{P},q}
    \end{bmatrix}\negthinspace\negthinspace,\ \mathbf{Y}_{\text{V},q}\negthinspace=\negthinspace\begin{bmatrix}
        \mathbf{Y}_{\text{V},q}^{11} & \mathbf{Y}_{\text{V},q}^{12}\\ (\mathbf{Y}_{\text{V},q}^{12})^T & \mathbf{Y}_{\text{V},q}^{22}
    \end{bmatrix}\negthinspace\negthinspace,\negthinspace}
\end{align}
where the elements $\mathbf{Y}_{\text{P},q}^{ij}$ and $\mathbf{Y}_{\text{V},q}^{ij}$ are $2\times2$ blocks. 
The trace of the multiplication of the Schur complement matrices and their dual is given by
\begin{subequations}\begin{align}
    \negthinspace&\text{tr}(\mathbf{Y}_{\text{P},q}\mathbf{S}_{\text{P},q})\negthinspace=\text{tr}(\mathbf{Y}_{\text{P},q}^{11}\mathbf{F}_{\text{P},q})-\text{tr}(\mathbf{Y}_{\text{P},q}^{11}\mathbf{X}_{\text{P},q})+\text{tr}(\mathbf{Y}_{\text{P},q}^{12}\mathbf{F}_{\text{PV},q}^T)\nonumber\\
    &\qquad\qquad\qquad\ +\text{tr}((\mathbf{Y}_{\text{P},q}^{12})^T\mathbf{F}_{\text{PV},q})+\text{tr}(\mathbf{Y}_{\text{P},q}^{22}\mathbf{F}_{\text{V},q}),\nonumber\\
    &\text{tr}(\mathbf{Y}_{\text{P},q}\mathbf{S}_{\text{P},q})|_{\mathbf{W}_{n,q}}=\mathrm{tr}(\mathbf{B}_{n,q}\mathbf{W}_{n,q})\eta_{\text{P},n,q},
    \\
    \negthinspace&\text{tr}(\mathbf{Y}_{\text{V},q}\mathbf{S}_{\text{V},q})\negthinspace=\negthinspace\text{tr}(\mathbf{Y}_{\text{V},q}^{11}\mathbf{F}_{\text{V},q})\negthinspace-\negthinspace\text{tr}(\mathbf{Y}_{\text{V},q}^{11}\mathbf{X}_{\text{V},q})\negthinspace+\negthinspace\text{tr}(\mathbf{Y}_{\text{V},q}^{12}\mathbf{F}_{\text{PV},q})\negthinspace\nonumber\\
    &\qquad\qquad\qquad\ +\text{tr}((\mathbf{Y}_{\text{V},q}^{12})^T\mathbf{F}_{\text{PV},q}^T)+\text{tr}(\mathbf{Y}_{\text{V},q}^{22}\mathbf{F}_{\text{P},q}),\nonumber\\
    &\text{tr}(\mathbf{Y}_{\text{V},q}\mathbf{S}_{\text{V},q})|_{\mathbf{W}_{n,q}}=\mathrm{tr}(\mathbf{B}_{n,q}\mathbf{W}_{n,q})\eta_{\text{V},n,q}.
\end{align}\end{subequations}
where the terms of $\text{tr}(\mathbf{Y}_{\text{P},q}^{11}\mathbf{X}_{\text{P},q})$ and $\text{tr}(\mathbf{Y}_{\text{V},q}^{11}\mathbf{X}_{\text{V},q})$ are not related to $\mathbf{W}_{n,q}$ and thus are excluded in $\eta_{\text{P},n,q}$ and $\eta_{\text{V},n,q}$:

\begin{subequations}\begin{align}
    \eta_{\text{P},n,q}=&\Big(\text{tr}(\mathbf{Y}_{\text{P},q}^{11}\mathbf{R}_{\text{P},n,q})+\text{tr}(\mathbf{Y}_{\text{P},q}^{12}\mathbf{R}_{\text{PV},n,q}^T)\nonumber\\
    &+\text{tr}((\mathbf{Y}_{\text{P},q}^{12})^T\mathbf{R}_{\text{PV},n,q})+\text{tr}(\mathbf{Y}_{\text{P},q}^{22}\mathbf{R}_{\text{V},n,q})\Big),\\
    \eta_{\text{V},n,q}=&\Big(\text{tr}(\mathbf{Y}_{\text{V},q}^{11}\mathbf{R}_{\text{V},n,q})+\text{tr}(\mathbf{Y}_{\text{V},q}^{12}\mathbf{R}_{\text{PV},n,q})\nonumber\\
    &+\text{tr}((\mathbf{Y}_{\text{V},q}^{12})^T\mathbf{R}_{\text{PV},n,q}^T)+\text{tr}(\mathbf{Y}_{\text{V},q}^{22}\mathbf{R}_{\text{P},n,q})\Big).
\end{align}\end{subequations}

The Lagrangian function with terms related to $\mathbf{W}_{n,q}$ can be written as:
\begin{align}
    &\negthinspace\mathcal{L}(\mathbf{W}\negthinspace_{n,q})\nonumber\\
    &=\text{tr}(\mathbf{W}\negthinspace_{n,q})\negthinspace+\negthinspace\sum_{u=1}^U\negthinspace \lambda_u\text{tr}(\mathbf{H}_{n,u}\negthinspace\mathbf{W}\negthinspace_{n,q}\negthinspace)\negthinspace-\negthinspace\mathrm{tr}(\mathbf{B}_{n,q}\negthinspace\mathbf{W}\negthinspace_{n,q})\eta_{\text{P},n,q}\negthinspace\nonumber\\
    &-\negthinspace\mathrm{tr}(\mathbf{B}_{n,q}\mathbf{W}\negthinspace_{n,q})\eta_{\text{V},n,q}\negthinspace+\negthinspace\rho_{n}\text{tr}(\mathbf{W}\negthinspace_{n,q})\negthinspace-\negthinspace\text{tr}(\mathbf{Y}\negthinspace_{n,q}\mathbf{W}\negthinspace_{n,q}).\negthinspace
\end{align}

The stationary condition is obtained by taking $\frac{\partial\mathcal{L}}{\partial\mathbf{W}_{n,q}}=\mathbf{0}$, yielding
\begin{align}
    \negthinspace\mathbf{Y}_{\negthinspace n,q}^\text{opt}&=\negthinspace( 1\negthinspace+\negthinspace\rho_{n})\mathbf{I}\negthinspace+\negthinspace\sum_{u=1}^U\negthinspace\lambda_u\mathbf{H}_{n,u}\negthinspace-\negthinspace\mathbf{B}_{n,q}(\eta_{\text{P}\negthinspace,n,q}\negthinspace+\negthinspace\eta_{\text{V}\negthinspace,n,q})\nonumber\\
    &=\mathbf{A}_{n,q}-\mathbf{B}_{n,q}(\eta_{\text{P},n,q}+\eta_{\text{V},n,q}),
\end{align}
where $\mathbf{Y}_{n,q}^\text{opt}$ denotes the optimal dual matrix corresponding to the PSD constraint of the optimal $\mathbf{W}_{n,q}^\text{opt}$.
$\mathbf{A}_{n,q}\succ 0$, $\mathrm{rank}(\mathbf{A}_{n,q})=N_\text{t}$, while $\mathrm{rank}(\mathbf{B}_{n,q})=1$ since $\mathbf{B}_{n,q}=\mathbf{b}_n(\theta_{n,q})\mathbf{b}_n^H(\theta_{n,q})$. Thus, we have $\mathrm{rank}(\mathbf{Y}_{n,q}^\text{opt})\ge N_\text{t}-1$ and $\dim(\mathcal{N}\{\mathbf{Y}_{n,q}^\text{opt}\})\le1$. 

Similar to the communication BF matrices, from the complementary slackness condition and PSD properties, we have $\text{tr}(\mathbf{Y}_{n,q}^\text{opt}\mathbf{W}_{n,q}^\text{opt})=0$ and $\mathbf{Y}_{n,q}^\text{opt}\mathbf{W}_{n,q}^\text{opt}=
\mathbf{0}$, implying $\mathrm{Col}\{\mathbf{W}_{n,q}^\text{opt}\}\subseteq \mathcal{N}\{\mathbf{Y}_{n,q}^\text{opt}\}$ and $\mathrm{rank}(\mathbf{W}_{n,q}^\text{opt})\le\dim(\mathcal{N}\{\mathbf{Y}_{n,q}^\text{opt}\})\le1$. 

In summary, it is guaranteed that the optimal solutions of (P1.3) satisfy $\mathrm{rank}(\mathbf{W}_{u}^{\text{opt}})=1$ and $\mathrm{rank}(\mathbf{W}_{n,q}^{\text{opt}})\le1$, demonstrating the tightness of the SDR.

\end{appendices}

\printbibliography

\end{document}